\pgfplotsset{compat=1.14}
\newcommand{\Dryad}{\textsc{Dryad}\xspace}
\newcommand{\houdini}{\textsc{Houdini}\xspace}
\newcommand{\calP}{{\mathcal P}}
\newcommand{\calB}{{\mathcal B}}
\newcommand{\calU}{{\mathcal U}}
\newcommand{\calD}{{\mathcal D}}
\newcommand{\calH}{{\mathcal H}}
\newcommand{\apprx}{{\mathit{approx}}}
\newcommand{\exampleend}{\hfill \tikz \draw (0, 0) -- ++(.5ex, .5ex) -- ++(0, -1ex) -- cycle;}
\title{Invariant Synthesis for Incomplete Verification Engines}  
\author{Daniel Neider\inst{1} \and Pranav Garg\inst{2} \and P.\ Madhusudan\inst{3} \and Shambwaditya Saha\inst{3} \and Daejun Park\inst{3}}
\institute{
	Max Planck Institute for Software Systems, Kaiserslautern, Germany \and
	Amazon India, Bangalore, India \and %
	University of Illinois at Urbana-Champaign, Champaign, IL, USA
}
\begin{document}

\pagestyle{plain}

\maketitle

\begin{abstract}
We propose a framework for synthesizing inductive invariants for incomplete verification engines, which soundly reduce logical problems in undecidable theories to decidable theories. Our framework is based on the counter-example guided inductive synthesis principle (CEGIS) and allows verification engines to communicate \emph{non-provability information} to guide invariant synthesis. We show precisely how the verification engine can compute such non-provability information and how to build effective learning algorithms when invariants are expressed as Boolean combinations of a fixed set of predicates. Moreover, we evaluate our framework in two verification settings, one in which verification engines need to handle quantified formulas and one in which verification engines have to reason about heap properties expressed in an expressive but undecidable separation logic. Our experiments show that our invariant synthesis framework based on non-provability information can both effectively synthesize inductive invariants and adequately strengthen contracts across a large suite of programs.
\end{abstract}


\section{Introduction} \label{sec:introduction}

The paradigm of \emph{deductive verification}~\cite{Floyd67,DBLP:journals/cacm/Hoare69} combines manual annotations and semi-automated theorem
proving to prove programs correct. Programmers annotate code they develop with contracts and inductive invariants,
and use high-level directives to an underlying, mostly-automated logic engine to verify their programs correct.
Several mature tools have emerged that support such verification, in particular tools based on the intermediate
verification language \textsc{Boogie}~\cite{DBLP:conf/fmco/BarnettCDJL05} and the SMT solver \textsc{Z3}~\cite{DBLP:conf/tacas/MouraB08} (e.g., \textsc{Vcc}~\cite{DBLP:conf/tphol/CohenDHLMSST09} and \textsc{Dafny}~\cite{DBLP:conf/lpar/Leino10}). 
Various applications that use such
tools to prove systems correct using manual annotations have been developed, including Microsoft Hypervisor verification~\cite{DBLP:conf/sofsem/CohenPS13},
reliable systems code such as \textsc{Verve}~\cite{DBLP:conf/pldi/YangH10}, ExpressOS~\cite{DBLP:conf/asplos/MaiPXKM13}, and Ironclad apps~\cite{DBLP:conf/osdi/HawblitzelHLNPZZ14},
as well as distributed systems in IronFleet~\cite{DBLP:conf/sosp/HawblitzelHKLPR15}. Fully automated use of such engines
for shallow specifications have also emerged, such as \textsc{Corral}~\cite{DBLP:conf/cav/LalQL12} for verifying device drivers,
\textsc{CST}~\cite{DBLP:conf/sp/ChenCQ015} to certify transactions in online services, and \textsc{GPUVerify}~\cite{DBLP:conf/oopsla/BettsCDQT12} to ensure race-freedom in GPU kernels.

Viewed through the lens of deductive verification, the primary challenges in automating verification are two-fold.
First, even when strong annotations in terms of contracts and inductive invariants are given, the validity problem for
the resulting verification conditions is often undecidable (e.g., in reasoning about the heap, reasoning with
quantified logics, and reasoning with non-linear arithmetic).
Second, the synthesis of loop invariants and strengthenings of contracts that prove a program correct needs to be automated so as to
lift this burden currently borne by the programmer.

A standard technique to solve the first problem (i.e., intractability of validity checking of verification conditions) is to build 
automated, sound-but-incomplete verification engines for validating verification conditions, thus skirting the undecidability barrier.
Several such techniques exist; for instance, for reasoning with quantified formulas, tactics such as E-matching~\cite{DBLP:journals/jacm/DetlefsNS05,DBLP:conf/cade/MouraB07}, pattern-based quantifier instantiation~\cite{DBLP:journals/jacm/DetlefsNS05}, and model-based quantifier instantiation~\cite{DBLP:conf/cav/GeM09} are effective in practice, though they are not complete for most background theories.
In the realm of heap verification, the so-called \emph{natural proof method} explicitly aims to provide automated and sound-but-incomplete methods for checking validity of verification conditions with specifications in separation logic~\cite{DBLP:conf/pldi/Qiu0SM13,DBLP:conf/pldi/PekQM14,DBLP:conf/pldi/ChuJT15}.
This method searches for proofs based on induction on recursively defined data structures, which is reduced to validity problems in \emph{decidable logics with quantification} that enables an 
efficient search for such proofs using SMT solvers.

Turning to the second problem of invariant generation,  several techniques have emerged that can synthesize
invariants automatically when validation of verification conditions fall in decidable classes.
Prominent among these are interpolation~\cite{mcmillan03} and IC3/PDR~\cite{DBLP:conf/vmcai/Bradley11,Een:2011:EIP:2157654.2157675}.
These techniques generalize from information gathered in proving underapproximations of the program correct  and are quite effective~\cite{DBLP:conf/cav/BeyerK11}---their efficacy in dealing with programs where the underlying logics are undecidable, however, is unclear.
Moreover, a class of counter-example guided inductive synthesis (CEGIS) methods have emerged recently, including the ICE learning model~\cite{DBLP:conf/cav/0001LMN14} for which various instantiations exist~\cite{DBLP:conf/cav/0001LMN14,DBLP:conf/cav/0001A14,DBLP:conf/popl/GarMNR16,DBLP:journals/corr/KrishnaPW15}.
The key feature of the latter methods is a program-agnostic, data-driven learner that learns invariants in tandem with a verification engine that provides concrete program configurations as counterexamples to incorrect invariants.

Although classical invariant synthesis techniques, such as \textsc{Houdini}~\cite{DBLP:conf/fm/FlanaganL01}, are sometimes used with incomplete verification engines, to the best of our knowledge there is no fundamental argument as to why this should work in general.
In fact, we are not aware of any systematic technique for synthesizing invariants when the underlying verification
problem falls in an undecidable theory. When verification is undecidable and the engine resorts to sound but
incomplete heuristics to check validity of verification conditions, it is unclear how to extend interpolation/IC3/PDR 
techniques to this setting. Data-driven learning of invariants is also hard to extend since the verification engine typically
cannot generate a concrete model for the negation of verification conditions when verification fails.
Hence, it cannot produce the concrete configurations that the learner needs.

\textbf{The main contribution of this paper is a general, learning-based invariant synthesis framework that learns invariants using non-provability information provided by verification engines}.
Intuitively, when a conjectured invariant results in verification conditions that cannot be proven, the idea is that the verification engine must return information that generalizes the reason for non-provability, hence pruning the space of future conjectured invariants. 
Our framework assumes a verification engine for an undecidable theory $\calU$ that reduces verification conditions to a decidable theory $\calD$ (e.g., using heuristics such as bounded quantifier instantiation to remove universal quantifiers, function unfolding to remove recursive definitions, and so on) that permits producing models for satisfiable formulas. The translation is assumed to be conservative in the sense that if the translated formula in $\calD$ is valid, then we are assured that the original verification condition is $\calU$-valid. If the verification condition is found to be not $\calD$-valid (i.e., its negation is satisfiable), on the other hand, our framework describes how to extract non-provability information from the $\calD$-model. This information is encoded as conjunctions and disjunctions in a Boolean theory $\calB$, called \emph{conjunctive/disjunctive non-provability information (CD-NPI)}, and communicated back to the learner.
To complete our framework, we show how the formula-driven problem of learning expressions from CD-NPI constraints can be reduced to the data-driven ICE model. This reduction allows us to use a host of existing ICE learning algorithms and results in a robust invariant synthesis framework that guarantees to synthesize a provable invariant if one exists.
We present the framwork in Section~\ref{sec:problemstatement} in detail.

However, our CD-NPI learning framework has non-trivial requirements on the verification engine, and building (or adapting) appropriate engines is not straightforward.
To show that our framework is indeed applicable and effective in practice, \textbf{our second contribution is the application of our technique to two real-world verification settings.}

The first setting, presented in Section~\ref{sec:framework}, is the verification of dynamically manipulated data-structures against rich logics that combine properties of structure, separation, arithmetic, and data---an important problem where verification often falls in undecidable theories.
We show how \emph{natural proof verification engines}~\cite{DBLP:conf/pldi/PekQM14}, which are essentially sound-but-incomplete verification engines that translate a powerful undecidable separation logic called \Dryad to decidable logics, can be fit into our framework.
We then implement a prototype of such a natural proof verification engines on top of the program verifier \textsc{Boogie}~\cite{DBLP:conf/fmco/BarnettCDJL05} and demonstrate that this prototype is able to fully automatically verify a large suite of benchmarks, containing standard algorithms for manipulating singly and doubly linked lists, sorted lists, as well as balanced and sorted trees.
Automatically synthesizing invariants for this suite of heap manipulating programs against an expressive separation logic is very challenging, and we do not know of any other current technique that can automatically prove all of them.
Thus, we have to leave a comparison to other approaches for future work.

The second setting is the verification of programs against specifications with universal quantification, which occur, for instance, when defining recursive properties. Again, we implement a prototype over \textsc{Boogie} and demonstrate its effectiveness on a series of benchmarks taken from verification competitions and real-world systems. We describe this application in Section~\ref{sec:quantified_fo} and conclude in Section~\ref{sec:conclusion}.


\subsection*{Related Work}

Techniques for invariant synthesis include abstract interpretation~\cite{DBLP:conf/popl/CousotC77}, interpolation~\cite{mcmillan03}, IC3~\cite{DBLP:conf/vmcai/Bradley11}, predicate abstraction~\cite{DBLP:conf/pldi/BallMMR01}, abductive inference~\cite{DBLP:conf/oopsla/DilligDLM13}, as well as synthesis algorithms that rely on constraint solving~\cite{DBLP:conf/pldi/gulwani08,invgen,DBLP:conf/cav/ColonSS03}. 
Complementing them are data-driven invariant synthesis techniques based on learning, such as Daikon~\cite{DBLP:conf/icse/ErnstCGN00} that learn likely invariants, and \houdini~\cite{DBLP:conf/fm/FlanaganL01} and ICE~\cite{DBLP:conf/cav/0001LMN14} that learn inductive invariants. The latter typically requires a teacher that can generate counter-examples if the conjectured invariant is not adequate or inductive. Classicially, this is possible only when the verification conditions of the program fall in decidable logics.
In this paper, we investigate data-driven invariant synthesis for incomplete verification engines and show that the problem can be reduced to ICE learning if the learning algorithm learns from non-provability information and produces hypotheses in a class that is restricted to positive Boolean formulas over a fixed set of predicates.
Data-driven synthesis of invariants has regained recent interest~\cite{DBLP:conf/cav/SharmaNA12,DBLP:conf/esop/0001GHALN13,DBLP:conf/sas/0001GHAN13,DBLP:conf/cav/0001LMN13,DBLP:conf/cav/0001LMN14,DBLP:conf/cav/0001A14,DBLP:journals/corr/KrishnaPW15,Zhu:2015:LRT:2784731.2784766,DBLP:conf/kbse/PavlinovicLS16,DBLP:conf/pldi/PadhiSM16} and our work addresses an important problem of synthesizing invariants for programs whose verifications conditions fall in undecidable fragments.

Our application to learning invariants for heap programs builds upon \textsc{Dryad}~\cite{DBLP:conf/pldi/Qiu0SM13,DBLP:conf/pldi/PekQM14}, and the natural proof technique line of work for heap verification developed by Qiu~et~al. 
%
Techniques, similar to \textsc{Dryad}, for automated reasoning of dynamically manipulated data structure programs have also been proposed in~\cite{DBLP:conf/pldi/ChuJT15,Chin:2012:AVS:2221987.2222283}. 
However, unlike our current work, none of these works synthesize heap invariants. Given invariant annotations in their respective logics, they provide procedures to validate if the verification conditions are valid.
There has also been a lot of work on synthesizing invariants for separation logic using shape analysis~\cite{Sagiv:1999:PSA:292540.292552,DBLP:journals/jacm/CalcagnoDOY11,DBLP:conf/cav/LeGQC14}. 
However, most of them are tailored for memory safety and shallow properties rather than rich properties that check full functional correctness of data structures. 
Interpolation has also been suggested recently to synthesize invariants involving a combination of data and shape properties~\cite{DBLP:conf/esop/AlbarghouthiBCK15}. 
It is, however, not clear how the technique can be applied to a more complicated heap structure, such as an AVL tree,
where shape and data properties are not cleanly separated but are intricately connected.
Recent work also includes synthesizing heap invariants in the logic from~\cite{DBLP:conf/cav/ItzhakyBINS13} by extending IC3~\cite{DBLP:conf/cav/ItzhakyBRST14,DBLP:conf/cav/KarbyshevBIRS15}. 

In this work, our learning algorithm synthesizes invariants over a fixed set of predicates. When all programs belong to a specific class, such as the class of programs manipulating data structures, these predicates can be uniformly chosen using templates. Investigating automated ways for discovering candidate predicates is a very interesting future direction. Related work in this direction includes recent works~\cite{DBLP:conf/kbse/PavlinovicLS16,DBLP:conf/pldi/PadhiSM16}.



\newcommand\scalemath[2]{\scalebox{#1}{\mbox{\ensuremath{\displaystyle #2}}}}

\section{An Invariant Synthesis Framework for Incomplete Verification Engines}
\label{sec:problemstatement}

In this section, we develop our framework for synthesizing inductive invariants for incomplete verification engines,
using a counter-example guided inductive synthesis approach. 
We do this in the setting where the hypothesis space consists of formulas that are 
Boolean combinations of a fixed set of predicates $\calP$, which need not be finite for the general framework---when developing concrete learning algorithms later, we will assume $\calP$ is a finite set of predicates.
For the rest of this section, let us fix a program $P$ that is annotated with assertions (and possibly with some partial annotations describing pre-conditions, post-conditions, and assertions).
Moreover, we refer to a formula $\alpha$ being weaker (or stronger) than $\beta$ in a logic $\mathcal L$, and by this we mean that $\vdash_\mathcal L \beta \Rightarrow \alpha$ (or $\vdash_\mathcal L \alpha \Rightarrow \beta$), respectively, where $\vdash_\mathcal L \varphi$ means that $\varphi$ is valid in $\mathcal L$.

Figure~\ref{fig:framework} (on Page~\pageref{fig:framework}) 
depicts our general framework of invariant synthesis when verification
is undecidable. We fix several parameters for our verification effort.
First, let us assume a uniform signature for logic, in terms of constant symbols, relation
symbols, functions, and types. We will, for simplicity of exposition,
use the same syntactic logic for the various logics $\calU$, $\calD$, $\calB$
in our framework as well as for the logic $\calH$ used to express invariants.

\begin{figure}[th]
	\centering

	\begin{tikzpicture}
		\node[inner sep=0pt] (left) {\includegraphics[width=0.6\textwidth]{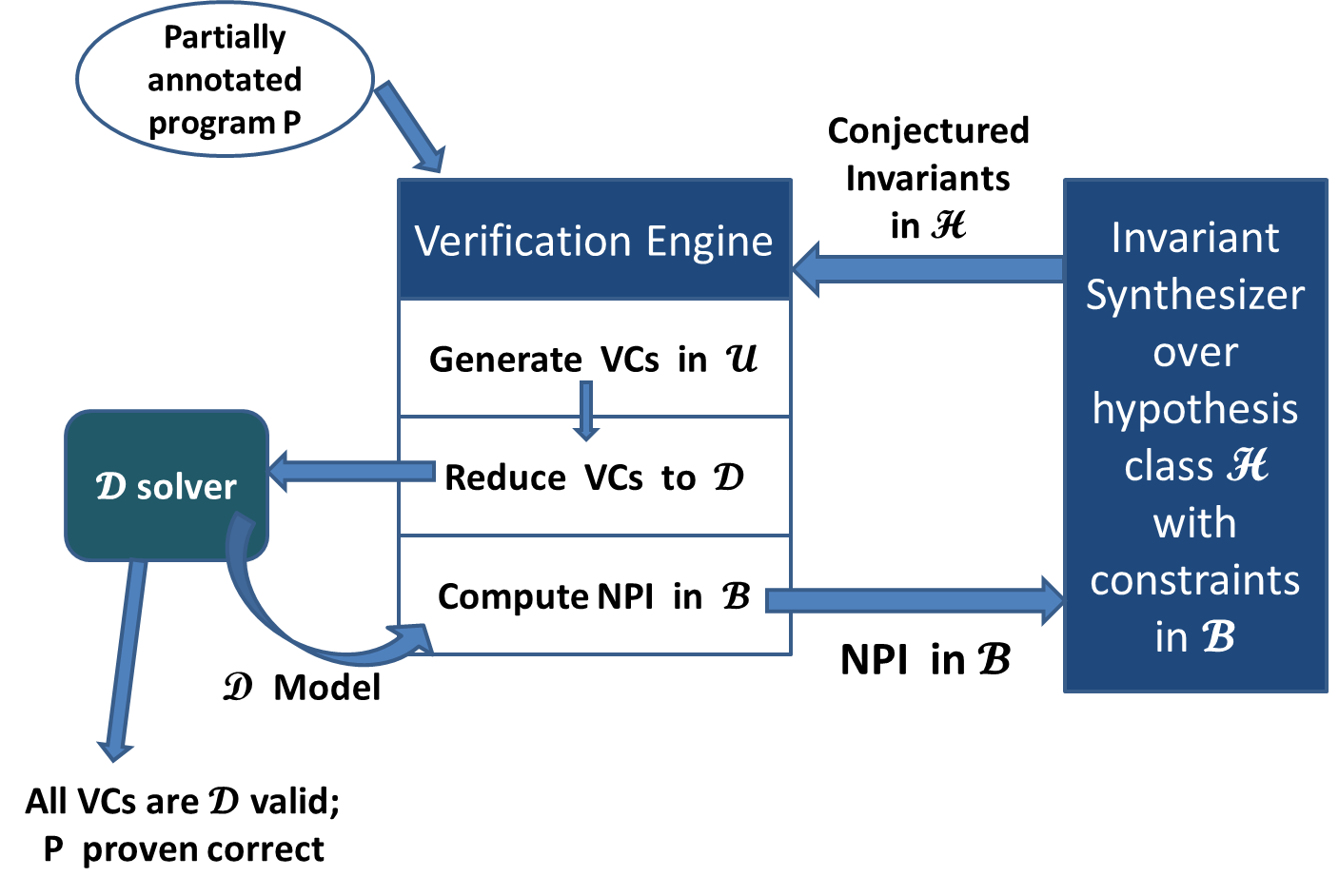}};
		\node[inner sep=0pt, right=3mm] (right) at (left.east) {
			\begin{tabular}[b]{ccp{.3\textwidth}}
				$\calH$ & -- & The hypothesis class of invariants \\
				$\calU$ & -- & The underlying theory of the program; undecidable \\
				$\calD$ & -- & The theory that the verification engine soundly reduces verification conditions to; decidable and can produce models \\
				$\calB$ & -- & The theory of propositional logic that the verification engine uses yo communicate to the invariant synthesis engine
			\end{tabular}
		};
	\end{tikzpicture}

	\caption{A non-provability information (NPI) framework for invariant synthesis when the verification logic is undecidable}  \label{fig:framework}
	
\end{figure}

Let us fix $\calU$ as the underlying theory that is ideally needed for validating the
verification conditions that arise for the program; we presume validity of formulas
in $\calU$ is undecidable.
Since $\calU$ is an undecidable theory, the engine
will resort to sound approximations (e.g., using bounded quantifier instantiations using mechanisms
such as triggers~\cite{DBLP:conf/cade/MouraB07}, bounded
unfolding of recursive functions, or natural proofs~\cite{DBLP:conf/pldi/PekQM14}) to reduce this logical task to a \emph{decidable} theory
$\calD$. This reduction is assumed to be sound in the sense that if the resulting formulas in $\calD$
are valid, then the verification conditions are valid in $\calU$ as well. 
If a formula is found \emph{not valid} in $\calD$, then we require that the logic solver for $\calD$ returns a model
for the negation of the formula.\kern-.06em\footnote{Note that
our framework requires model construction in the theory $\calD$. Hence, incomplete logic
solvers for $\calU$ that simply time out after some time threshold or search for a proof of
a particular kind and give up otherwise are not suitable candidates.}
Note that this model may not be a model for the negation of the formula in $\calU$.

Moreover, we fix a hypothesis class $\calH$ for invariants consisting of \emph{positive} Boolean combination of predicates in a fixed set of
predicates $\calP$.
Note that restricting to \emph{positive} formulas over ${\mathcal P}$ is not a restriction,
as one can always add negations of predicates to ${\mathcal P}$, thus effectively
synthesizing any Boolean combination of predicates. The restriction to positive
Boolean formulas is in fact desirable, as it allows restricting invariants to
\emph{not} negate certain predicates, which is useful when predicates have
intuitionistic definitions (as several recursive definitions of heap properties do).

The invariant synthesis proceeds in rounds, where in each round the synthesizer
proposes invariants in $\mathcal{H}$.
The verification engine generates verification conditions in accordance to these
invariants in the underlying theory $\calU$. It then proceeds to translate them
into the decidable theory $\calD$, and gives them to a solver that decides
validity in the theory $\calD$. If the verification conditions are found to be $\calD$-valid,
then by virtue of the fact that the verification engine reduced VCs in a sound fashion
to $\calD$, we are done proving the program $P$.

However, if the formula is found not to be $\calD$-valid, the solver returns a $\calD$-model 
for the negation of the formula. The verification engine then extracts from this model certain \emph{non-provability information (NPI)}, expressed as Boolean formulas in a Boolean theory $\calB$, that
captures more general reasons why the verification failed 
and eliminates not only the current conjectured invariant but also others that can be inferred to be incorrect from the current verification effort
(the rest of this section is devoted to developing this notion of non-provability information).
This non-provability information is communicated to the synthesizer, which then proceeds to synthesize 
a new conjecture invariant that satisfies the non-provability constraints provided in all previous rounds.
The following example illustrates the logics involved in our framework in the context heap-manipulating programs.

\begin{example}
In a verification setting involving heaps, the logic $\calU$ could  be a rich separation logic with recursive definitions and $\calD$ could be the quantifier-free theory of uninterpreted functions, arithmetic, and sets. The verification engine can reduce verification conditions in $\calU$ to $\calD$ by partially unfolding recursive definitions and expressing heaplets using sets, to obtain sound but incomplete automatic validity checking. 

Futhermore, the theory $\calB$ can be chosen to be the just the propositional theory over a set of predicates $\calP$. The verification engine will then communicate formulas over $\calB$ to the synthesis engine that restrict the class of invariants such that the synthesis engine can generate in the future. \exampleend
\end{example}

In order for the verification engine to extract meaningful non-provability information, we make the following natural assumption, called \emph{normality}, which essentially states that the engine can do at least some minimal Boolean reasoning.

\begin{definition} \label{def:normal_oracle}
A verification engine is \emph{normal} if it satisfies two properties:
\begin{enumerate}[topsep=2pt,itemsep=3pt,partopsep=0ex,parsep=1pt]
	\item \label{itm:normal_precondition}
	if the engine cannot prove the validity of the Hoare triple $\{ \alpha \} s \{ \gamma \}$ and ${}\vdash_\calB \delta \Rightarrow \gamma$, then it cannot prove the validity of the Hoare triple $\{ \alpha \} s \{ \delta \}$; and
	\item \label{itm:normal_postcondition}
	if the engine cannot prove the validity of the Hoare triple $\{ \gamma \} s \{ \beta \}$ and ${}\vdash_\calB \gamma \Rightarrow \delta$, then it cannot prove the validity of the Hoare triple $\{ \delta\} s \{ \beta \}$.
\end{enumerate}
\end{definition}

Intuitively, Condition~\ref{itm:normal_precondition} of Definition~\ref{def:normal_oracle} means that if an oracle cannot prove the validity of $\{ \alpha \} s \{ \gamma \}$, then it cannot prove the validity of any strengthening $\delta$ of the postcondition $\gamma$. Similarly, Condition~\ref{itm:normal_postcondition} means that if an oracle cannot prove the validity of $\{ \gamma \} s \{ \beta \}$, then it cannot prove the validity of any weakening $\delta$ of the precondition $\gamma$.

The remainder of this section is now structured as follows. 
In Section~\ref{sec:CDNPI}, we first develop an appropriate language to communicate non-provability constraints, which allow the learner to appropriately weaken or strengthen a future hypothesis.
It turns out that \emph{pure conjunctions} and \emph{pure disjunctions} over ${\mathcal P}$, which we term \emph{CD-NPI constraints} (conjunctive/disjunctive non-provability information constraints), are sufficient for this purpose. We also describe concretely how the verification engine can extract this non-provability information from $\calD$-models that witness that negations of VCs are satisfiable. Then, in Section~\ref{sec:ice_framework}, we show how to build learners for CD-NPI constraints by reducing this learning problem to another, well-studied learning framework for invariants called ICE learning. We illustrate our framework with an example in Section~\ref{sec:illustrative-example} and finally argue in Section~\ref{sec:framework-correctness} that our framework is sound and guarantees to converge to a provable invariant if one exists.

\subsection{Conjunctive/Disjunctive Non-provability Information}
\label{sec:CDNPI}

We assume that the underlying decidable theory $\calD$ is stronger than propositional theory $\calB$,  meaning that every valid statement in $\calB$ is valid in $\calD$ as well.
The reader may want to keep the following as a running example where $\calD$ is the decidable
theory of uninterpreted functions and linear arithmetic, say. In this setting,
a formula is $\calB$-valid if, when treating atomic formulas as Boolean variables, the formula
is propositionally valid. For instance, $f(x)=y \Rightarrow f(f(x))=f(y)$ will not be $\calB$-valid
though it is $\calD$-valid, while $f(x)=y \vee \neg(f(x)=y)$ is $\calB$-valid.

To formally define CD-NPI constraints and their extraction from a failed verification attempt, let us first introduce the following notation.
For any $\calU$-formula $\varphi$, let $\apprx(\varphi)$ denote the $\calD$-formula that the verification engine generates such
that the $\calD$-validity of $\apprx(\varphi)$ implies the $\calU$-validity of $\varphi$.
Moreover, for any Hoare triple $\{\alpha\} s \{\beta\}$, let $VC(\{\alpha\} s \{\beta\})$ denote the verification
condition corresponding to the Hoare triple that the verification engine generates.

Let us now assume, for the sake of a simpler exposition, that the program has a single annotation hole $A$ where we need to synthesize
an inductive invariant and prove the program correct.
Further, suppose the learner conjectures an annotation $\gamma$ as an inductive invariant for the annotation hole $A$, 
and the verification engine fails to prove the verification condition corresponding to a Hoare triple $\{\alpha\} s \{\beta\}$, 
where either $\alpha$, $\beta$, or both could involve the synthesized annotation. 
This means that the negation of $\apprx( VC( \{ \alpha \} s \{ \gamma \} ) )$ is $\calD$-satisfiable and the verification engine needs to extract non-provability information from a model of it. 
To this end, we assume that every program snippet $s$ has been augmented with a set of ghost variables $g_1, \ldots, g_n$ that track the predicates $p_1, \ldots, p_n$ mentioned in the invariant (i.e., these ghost variables are assigned the values of the predicates).
The valuation $\vec{v}= \langle v_1, \ldots, v_n\rangle$ of the ghost variables in the model before the execution of $s$ and the valuation $\vec{v}' = \langle v'_1, \ldots, v'_n \rangle$ after the execution of $s$ can then be used to derive non-provability information, as we describe shortly.

The type of non-provability information the verification engine extracts depends on where the annotation appears in a Hoare triple $\{\alpha\} s \{\beta\}$. More specifically, the synthesized annotation might appear in $\alpha$, in $\beta$, or in both. We now handle all three cases individually.
\begin{itemize}
	\item 
	Assume the verification of a Hoare triple of the form $\{ \alpha \} s \{ \gamma \}$ fails (i.e., the verification engine cannot prove a verification condition where the pre-condition $\alpha$ is a user-supplied annotation and the post-condition is the synthesized annotation $\gamma$). Then, $\apprx( VC( \{ \alpha \} s \{ \gamma \} ) )$ is not $\calD$-valid, and the decision procedure for $\calD$ would generate a model for its negation. 

	Since $\gamma$ is a positive Boolean combination, the reason why $\vec{v}'$ does not satisfy $\gamma$ is due to the variables mapped to $\textit{false}$ by $\vec{v}'$, as any valuation extending this will not satisfy $\gamma$. Intuitively, this means that the $\calD$-solver is not able to prove the predicates in $P_\textit{false} = \{ p_i \mid v_i' = \textit{false}\}$. In other words, $\{ \alpha \} s \{ \bigvee P_\textit{false} \}$ is unprovable (a witness to this fact is the model of the negation of $\apprx( VC( \{ \alpha \} s \{ \gamma \} ) )$ from which the values $\vec{v}'$ are derived). Note that any invariant $\gamma'$ that is stronger than $\bigvee P_\textit{false}$ will result in an unprovable VC due to the verification engine being normal. Consequently we can choose $\chi =\bigvee P_\textit{false}$ as the weakening constraint, demanding that future invariants should not be stronger than $\chi$.

	The verification engine now communicates $\chi$ to the synthesizer, asking it never to conjecture in future rounds invariants $\gamma''$ that are stronger than $\chi$ (i.e., such that $\not\vdash_{\calB} \gamma'' \Rightarrow \chi$).
	\item
	The next case is when a Hoare triple of the form $\{ \gamma \} s \{ \beta \}$ fails to be proven  (i.e., the verification engine cannot prove a verification condition where the post-condition $\beta$ is a user-supplied annotation and the pre-condition is the synthesized annotation $\gamma$). Using similar arguments as above, the \emph{conjunction} $\eta = \bigwedge \{ p_i \mid v_i = \mathit{true} \} $ of the predicates mapped to $\textit{true}$ by $\vec{v}$ in the corresponding $\calD$-model gives a stronger precondition $\eta$ such that $\{ \eta\} s \{ \alpha \}$ is not provable. Hence, $\eta$ is a valid \emph{strengthening} constraint. The verification engine now communicates $\eta$ to the synthesizer, asking it never to conjecture in future rounds invariants $\gamma''$ that are weaker than $\eta$ (i.e., such that $\not\vdash_{\calB} \eta \Rightarrow \gamma''$).		
	\item
	Finally, consider the case when the Hoare triple is of the form $\{ \gamma \} s \{ \gamma \}$ and fails to be proven (i.e., the verification engine cannot prove a verification condition where the pre- and post-condition is the synthesized annotation $\gamma$). In this case, the verification engine can offer advice on how $\gamma$ can be strengthened \emph{or} weakened to avoid this model. Analogous to the two cases above, the verification engine extracts a pair of formulas $( \eta, \chi )$, called an \emph{inductivity constraint}, based on the variables mapped to $\textit{true}$ by $\vec{v}$ and to $\textit{false}$ by $\vec{v}'$. The meaning of such a constraint is that the invariant synthesizer must conjecture in future rounds invariants $\gamma''$ such that either $\not\vdash_{\calB} \eta \Rightarrow \gamma'' $ or $\not\vdash_{\calB} \gamma'' \Rightarrow \chi $ holds.
\end{itemize}

This leads to the following scheme, where $\gamma$ denotes the conjectured invariant:
\begin{itemize}
	\item When a Hoare triple of the form $\{ \alpha \} s \{ \gamma \}$ fails, the verification engine returns the $\calB$-formula
	\[ \bigvee_{i \mid v_i' = \textit{false}} p_i \]
	as a \emph{weakening constraint}.
	\item When a Hoare triple of the form $\{ \gamma \} s \{ \beta \}$  fails, the verification engine returns the $\calB$-formula
	\[ \bigwedge_{i \mid v_i = \textit{true}} p_i \]
	as a \emph{strengthening constraint}.
	\item When a Hoare triple of the form $\{ \gamma \} s \{ \gamma \}$ fails, the verification engine returns the pair
	\[ (\bigwedge_{i \mid v_i = \textit{true}} p_i,  \bigvee_{i \mid v'_i = \textit{false}} p_i) \]
	of $\calB$-formulas as an inductivity constraint.
\end{itemize}

It is not hard to verify that the above  formulas are proper strengthening and weakening constraints,
in the sense that \emph{any} inductive invariant must satisfy these constraints.
This motivates the following form of non-provability information.

\begin{definition}[CD-NPI Samples]
Let $\mathcal P$ be a set of predicates. A \emph{CD-NPI sample} (short for \emph{conjunction-disjunction-NPI sample}) is a triple $\mathfrak S = (W, S, I)$  consisting of 
\begin{itemize}
	\item a finite set $W$ of disjunctions over $\mathcal P$ (weakening constraints);
	\item a finite set $S$ of conjunctions over $\mathcal P$ (strengthening constraints); and
	\item a finite set $I$ of pairs, where the first element is a conjunction and the second is a disjunction over $\mathcal P$ (inductivity constraints).
\end{itemize}

An annotation $\gamma$ is \emph{consistent} with a CD-NPI sample $\mathfrak S = (W, S, I)$ if
\begin{itemize}
 \item $\not \vdash_\calB \gamma \Rightarrow \chi$ for each $\chi \in W$;
 \item $\not\vdash_\calB \eta \Rightarrow \gamma$ for each $\eta \in S$; and
 \item $\not\vdash_\calB \eta \Rightarrow \gamma$ or $\not\vdash_\calB \gamma \Rightarrow \chi$ for each $(\eta, \chi) \in I$.
\end{itemize}
\end{definition}

A CD-NPI learner is an effective procedure that synthesizes, given an CD-NPI sample, an annotation $\gamma$ consistent with the sample.
In our framework, the process of proposing candidate annotations and checking them
repeats until the learner proposes a valid annotation or it detects that no valid annotation exists (e.g., if the class of candidate annotations is finite and all annotations are exhausted). 
We comment on using an CD-NPI learner in this iterative fashion below.

\subsection{Building CD-NPI Learners}
\label{sec:ice_framework}
Let us now turn to the problem of building efficient learning algorithms for CD-NPI constraints. To this end, we assume  that the set of predicates ${\mathcal P}$ is finite.

Roughly speaking, the CD-NPI learning problem is to synthesize annotations that are positive Boolean combinations of predicates
in ${\mathcal P}$ and that are consistent with given CD-NPI samples.
Though this is a learning problem where samples are \emph{formulas}, in this section we will reduce CD-NPI
learning to a learning problem from \emph{data}. In particular, we will show that CD-NPI learning
reduces to the ICE learning framework for learning positive Boolean formulas.
The latter is a well-studied framework, and the reduction allows us to use efficient learning algorithms
developed for ICE learning in order to build CD-NPI learners.

We now first recap the ICE-learning framework and then reduce CD-NPI learning to ICE learning. Finally, we briefly sketch how the popular \houdini algorithm can be seen as an ICE learning algorithm, which, in turn, allows us to \houdini as an CD-NPI learning algorithm.

\subsubsection*{The ICE learning framework}
Although the ICE learning framework~\cite{DBLP:conf/cav/0001LMN14} is a general framework for learning inductive invariants, we consider here the case of learning Boolean formulas.
To this end, let us fix a set $B$ of Boolean variables, and let $\calH$ be a subclass of positive Boolean formulas over $B$, called the hypothesis class, which specifies the admissible solutions to the learning task.

The objective of the (passive) ICE learning algorithm is to learn a formula in $\calH$ from a sample of positive examples, negative examples, and implication examples. More formally, if $\mathcal V$ is the set of valuations $v \colon B \to \{\mathit{true}, \mathit{false}\}$ (mapping variables in $B$ to true or false),
then an \emph{ICE sample} is a triple $\mathcal S = (S_+, S_-, S_\Rightarrow)$ where
\begin{itemize}
	\item $S_+ \subseteq \mathcal V$ is a set of \emph{positive examples};
	\item $S_- \subseteq \mathcal V$ is a set of \emph{negative examples}; and
	\item $S_\Rightarrow \subseteq \mathcal V \times \mathcal V$ is a set of \emph{implications}.
\end{itemize}
Note that positive and negative examples are \emph{concrete} valuations of the variables $B$, and the implication examples are pairs of such concrete valuations.

A formula $\varphi$ is said to be \emph{consistent with an ICE sample $\mathcal S$} if it satisfies the following three conditions:\footnote{In the following, $\models$ denotes the usual satisfaction relation.}
\begin{itemize}
	\item $v \models \varphi$ for each $v \in S_+$;
	\item $v \not \models \varphi$ for each $v \in S_-$; and
	\item $v_1 \models \varphi$ implies $v_2 \models \varphi$ for each $(v_1, v_2) \in S_\Rightarrow$.
\end{itemize}

In algorithmic learning theory, one distinguished between \emph{passive learning} and \emph{iterative learning}. The former refers to a learning setting in which a learning algorithm is confronted with a finite set of data and has to learn a concept that is consistent with this data.
Using our terminology, the \emph{passive ICE learning problem} for a hypothesis class $\calH$ is then
\begin{quote}
	 \textit{``given an ICE sample $\mathcal S$, find a formula in $\calH$ that is consistent with $\mathcal S$''}.
\end{quote}
Recall that we here require the learner to learn positive Boolean formulas, which is slightly stricter than the original definition~\cite{DBLP:conf/cav/0001LMN14}.

Iterative learning, on the other hand, is the iteration of passive learning where new data is added to the sample from one iteration to the next. In a verification context, this new data is generated by the verification engine in response to incorrect annotations and used to guide the learning algorithm towards an annotation that is adequate to prove the program. To reduce our learning framework to ICE learning, it is therefore sufficient to reduce the (passive) CD-NPI learning problem described above to the passive ICE learning problem. We do this next.

\subsubsection*{Reduction of passive CD-NPI learning to passive ICE learning}
\label{sec:CD-NPI2ICE}

Let $\calH$ be a subclass of positive Boolean formulas.
We reduce the CD-NPI learning problem for $\calH$ to the ICE learning problem for $\calH$.
The main idea is to
\begin{enumerate*}[label={(\alph*)}]
	\item treat each predicate $p \in \calP$ as a Boolean variable for the purpose of ICE learning and
	\item to translate a CD-NPI sample $\mathfrak G$ into an \emph{equi-consistent} ICE sample $\mathcal S_\mathfrak S$, meaning that a positive Boolean formula is consistent with $\mathfrak S$ if and only if it is consistent with $\mathcal S_\mathfrak S$.
\end{enumerate*}
Then, learning a consistent formula in the CD-NPI framework for the hypothesis class $\calH$ 
reduces to learning consistent formulas in $\calH$ in the ICE learning framework.

The following lemma will help translate between the two frameworks.  Its proof is straightforward,
and follows from the fact that for any \emph{positive} formula $\alpha$, if a valuation $v$ sets a larger
subset of propositions to true than $v'$ does and $v' \models \alpha$, then $v \models \alpha$ as well.

\begin{lemma}\label{lem:lemma_1}
Let $v$ be a valuation of $\mathcal P$ and $\alpha$ be a positive Boolean formula over $\mathcal P$.
Then, the following holds:
\begin{itemize}[topsep=2pt,itemsep=3pt,partopsep=0ex,parsep=1pt]
 \item $v \models \alpha$ if and only if ${}\vdash_\calB (\bigwedge_{p \mid v(p)=\mathit{true}} p ) \Rightarrow \alpha$ (and, thus, $v \not \models \alpha$ if and only if \hbox{${}\not\vdash_\calB (\bigwedge_{p \mid v(p)=\mathit{true}} p ) \Rightarrow \alpha)$}.
 \item $v \models \alpha$ if and only if ${}\not \vdash_\calB \alpha \Rightarrow (\bigvee_{p \mid v(p)=\mathit{false}} p )$.
\end{itemize}
\end{lemma}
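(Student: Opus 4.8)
\textbf{Proof proposal for Lemma~\ref{lem:lemma_1}.}

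The plan is to prove each of the two bi-conditionals directly, relying on the single structural fact highlighted just before the statement: for a \emph{positive} Boolean formula $\alpha$, if $v$ sets a superset of propositions to $\mathit{true}$ compared to $v'$ and $v' \models \alpha$, then $v \models \alpha$ as well (monotonicity of positive formulas). I would first record this monotonicity fact as the workhorse; its own proof is a trivial induction on the structure of $\alpha$ (base case: a single predicate; inductive cases: $\wedge$ and $\vee$, using that there is no negation), so I would only mention it rather than grind through it.

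For the first item, write $v_\wedge := \bigwedge_{p \mid v(p)=\mathit{true}} p$, i.e., the conjunction of exactly those predicates made true by $v$. For the ``only if'' direction, suppose $v \models \alpha$. Any valuation $w$ with $w \models v_\wedge$ sets every predicate in $v$'s true-set to true, hence sets a superset of $v$'s true-set to true; by monotonicity $w \models \alpha$. Since $w$ was an arbitrary model of $v_\wedge$, we get ${}\vdash_\calB v_\wedge \Rightarrow \alpha$ (recalling that $\calB$-validity is just propositional validity over the predicates treated as Boolean variables). For the ``if'' direction, suppose ${}\vdash_\calB v_\wedge \Rightarrow \alpha$. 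Observe that $v$ itself is a model of $v_\wedge$ — indeed $v(p)=\mathit{true}$ for every conjunct $p$ of $v_\wedge$ by construction. Hence $v \models \alpha$. The parenthetical contrapositive statement ($v\not\models\alpha$ iff ${}\not\vdash_\calB v_\wedge\Rightarrow\alpha$) is then immediate.

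For the second item, write $v_\vee := \bigvee_{p \mid v(p)=\mathit{false}} p$. I would prove the equivalent contrapositive: $v \not\models \alpha$ if and only if ${}\vdash_\calB \alpha \Rightarrow v_\vee$. For the forward direction, assume $v \not\models \alpha$ and let $w$ be any model of $\alpha$; I must show $w \models v_\vee$, i.e., $w$ makes true at least one predicate that $v$ makes false. Suppose not: then $w$ makes false every predicate that $v$ makes false, so $w$'s true-set is a subset of $v$'s true-set; equivalently $v$'s true-set is a superset of $w$'s, and since $w \models \alpha$, monotonicity gives $v \models \alpha$, a contradiction. Hence $w \models v_\vee$, and as $w$ was arbitrary, ${}\vdash_\calB \alpha \Rightarrow v_\vee$. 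For the converse, assume ${}\vdash_\calB \alpha \Rightarrow v_\vee$; since $v$ makes every disjunct of $v_\vee$ false, $v \not\models v_\vee$, and therefore $v \not\models \alpha$. Taking contrapositives once more yields the stated form ``$v \models \alpha$ iff ${}\not\vdash_\calB \alpha \Rightarrow v_\vee$''.

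There is no real obstacle here; the only point requiring a little care is bookkeeping the direction of the set inclusions in the monotonicity appeal (it is easy to flip ``superset'' and ``subset''), and keeping straight that $\calB$-validity means \emph{propositional} validity with the predicates as opaque Boolean variables — which is exactly what legitimizes quantifying over ``all models $w$'' in the two ``only if'' arguments. I would state the monotonicity fact once, cleanly, and then the four short implications fall out mechanically.
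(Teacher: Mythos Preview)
Your proposal is correct and follows precisely the approach the paper indicates: the paper does not spell out a full proof but simply states that the lemma ``is straightforward, and follows from the fact that for any \emph{positive} formula $\alpha$, if a valuation $v$ sets a larger subset of propositions to true than $v'$ does and $v' \models \alpha$, then $v \models \alpha$ as well.'' Your argument is exactly the unpacking of this monotonicity observation into the four required implications, so there is nothing to add.
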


This motivates our translation, which relies on two functions, $d$ and $c$.
The function $d$ translates a disjunction $\bigvee J$, where $J \subseteq \mathcal P$ is a subset of propositions, into the valuation
\[ d \bigl( \bigvee J \bigr) = v \text{ with $v(p) = \textit{false}$ if and only if $p \in J$}. \]
The function $c$ translates a conjunction $\bigwedge J$, where $J \subseteq \mathcal P$, into the valuation
\[ c \bigl( \bigwedge J \bigr) = v \text{ with $v(p) = \textit{true}$ if and only if $p \in J$}. \]
By substituting $v$ in Lemma~\ref{lem:lemma_1} with $c(\bigwedge J)$ and $d(\bigvee J)$, respectively, one immediately obtains the following result.

\begin{lemma} \label{lem:pos-boolean}
Let $J \subseteq \mathcal P$ and $\alpha$ be a positive Boolean formula over $\mathcal P$.
Then, the following holds:
\begin{enumerate*}[label={(\alph*)}]
 \item $c \bigl( \bigwedge J \bigr) \not \models \alpha$ if and only if ${}\not\vdash_\calB \bigwedge J \Rightarrow \alpha$, and 
 \item $d \bigl( \bigvee J \bigr) \models \alpha$ if and only if ${}\not \vdash_\calB \alpha \Rightarrow \bigvee J$.
\end{enumerate*}
\end{lemma}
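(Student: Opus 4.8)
The plan is to derive Lemma~\ref{lem:pos-boolean} directly from Lemma~\ref{lem:lemma_1} by instantiating the free valuation $v$ with the two specific valuations produced by the functions $c$ and $d$, and then simplifying the conjunction/disjunction that appears on the right-hand side of each biconditional. The key observation is that the constructions of $c$ and $d$ are exactly tailored so that these simplifications collapse to the identity: $c(\bigwedge J)$ sets precisely the propositions in $J$ to \textit{true}, and $d(\bigvee J)$ sets precisely the propositions in $J$ to \textit{false}.

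For part~(a): take the first item of Lemma~\ref{lem:lemma_1}, namely the statement $v \not\models \alpha$ if and only if ${}\not\vdash_\calB (\bigwedge_{p \mid v(p)=\mathit{true}} p) \Rightarrow \alpha$, and substitute $v := c(\bigwedge J)$. By definition of $c$, the set $\{p \mid v(p) = \mathit{true}\}$ equals $J$, so the antecedent $\bigwedge_{p \mid v(p)=\mathit{true}} p$ is (syntactically, up to reordering of conjuncts, which does not affect $\calB$-validity) just $\bigwedge J$. This immediately yields $c(\bigwedge J) \not\models \alpha$ iff ${}\not\vdash_\calB \bigwedge J \Rightarrow \alpha$. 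For part~(b): take the second item of Lemma~\ref{lem:lemma_1}, namely $v \models \alpha$ iff ${}\not\vdash_\calB \alpha \Rightarrow (\bigvee_{p \mid v(p)=\mathit{false}} p)$, and substitute $v := d(\bigvee J)$. By definition of $d$, the set $\{p \mid v(p) = \mathit{false}\}$ equals $J$, so the consequent $\bigvee_{p \mid v(p)=\mathit{false}} p$ is just $\bigvee J$, giving $d(\bigvee J) \models \alpha$ iff ${}\not\vdash_\calB \alpha \Rightarrow \bigvee J$.

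There is really no substantive obstacle here — the lemma is a direct corollary, and the only thing to be careful about is the bookkeeping of which valuation goes with which clause of Lemma~\ref{lem:lemma_1} (use the "$\not\models$" form of the first clause with $c$, and the "$\models$" form of the second clause with $d$), plus the minor point that one should note that reordering conjuncts or disjuncts, and treating $\bigwedge \emptyset$ as \textit{true} and $\bigvee \emptyset$ as \textit{false}, is harmless for $\calB$-validity. I would also briefly remark why only one direction of each biconditional in Lemma~\ref{lem:lemma_1} is needed for each part: part~(a) concerns the negative (non-satisfaction) case, matching the parenthetical in the first bullet of Lemma~\ref{lem:lemma_1}, while part~(b) concerns the positive (satisfaction) case, matching the second bullet directly. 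Thus the whole proof is a two-line substitution argument, and I would present it as such rather than re-proving anything about positive Boolean formulas.
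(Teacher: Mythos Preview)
Your proposal is correct and follows exactly the same approach as the paper, which simply states that the result follows immediately by substituting $v$ in Lemma~\ref{lem:lemma_1} with $c(\bigwedge J)$ and $d(\bigvee J)$, respectively. You have spelled out the substitution in more detail than the paper does, but the argument is identical.
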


Based on the functions $c$ and $d$, the translation of a CD-NPI sample into an equi-consistent ICE sample is as follows.

\begin{definition} \label{def:ICE_sample}
Given a CD-NPI sample $\mathfrak S = (W, S, I)$, the ICE sample $\mathcal S_\mathfrak S = (S_+, S_-, S_\Rightarrow)$ is defined by
$S_+ = \bigl \{ d(\bigvee J) \mid \bigvee J \in W \bigr \}$,
$S_- = \bigl \{ c(\bigwedge J) \mid \bigwedge J \in S \bigr \}$, and
$S_\Rightarrow = \bigl \{ \bigl( c(\bigwedge J_1), d(\bigvee J_2) \bigr ) \mid (\bigwedge J_1, \bigvee J_2) \in I \bigr \}$.
\end{definition}

By virtue of the lemma above, we can now establish the correctness of the reduction from the CD-NPI learning problem to the ICE learning problem.

\begin{theorem} \label{thm:ICE_correct}
Let $\mathfrak S = (W, S, I)$ be a CD-NPI sample, $\mathcal S_\mathfrak S = (S_+, S_-, S_\Rightarrow)$ the ICE sample as in Definition~\ref{def:ICE_sample}, $\gamma$ a positive Boolean formula over $\mathcal P$. Then, $\gamma$ is consistent with $\mathfrak S$ if and only if $\gamma$ is consistent with $\mathcal S_\mathfrak S$.
\end{theorem}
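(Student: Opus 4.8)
The plan is to unfold the definitions of consistency on both sides and check the three constraint types one at a time, using Lemma~\ref{lem:pos-boolean} as the bridge. Since $\gamma$ is by assumption a positive Boolean formula over $\mathcal P$, Lemma~\ref{lem:pos-boolean} applies to every disjunction $\bigvee J$ and conjunction $\bigwedge J$ appearing in $\mathfrak S$. Recall that $\gamma$ is consistent with $\mathfrak S$ iff (i) $\not\vdash_\calB \gamma \Rightarrow \chi$ for each $\chi \in W$, (ii) $\not\vdash_\calB \eta \Rightarrow \gamma$ for each $\eta \in S$, and (iii) $\not\vdash_\calB \eta \Rightarrow \gamma$ or $\not\vdash_\calB \gamma \Rightarrow \chi$ for each $(\eta,\chi) \in I$; whereas $\gamma$ is consistent with $\mathcal S_\mathfrak S$ iff $v \models \gamma$ for each $v \in S_+$, $v \not\models \gamma$ for each $v \in S_-$, and $v_1 \models \gamma \Rightarrow v_2 \models \gamma$ for each $(v_1,v_2) \in S_\Rightarrow$.

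The three bullets then match up in order. For the weakening constraints: a typical element of $S_+$ is $d(\bigvee J)$ for $\bigvee J \in W$, and part (b) of Lemma~\ref{lem:pos-boolean} gives $d(\bigvee J) \models \gamma$ iff $\not\vdash_\calB \gamma \Rightarrow \bigvee J$, which is exactly the condition in (i) with $\chi = \bigvee J$. For the strengthening constraints: a typical element of $S_-$ is $c(\bigwedge J)$ for $\bigwedge J \in S$, and part (a) of the lemma gives $c(\bigwedge J) \not\models \gamma$ iff $\not\vdash_\calB \bigwedge J \Rightarrow \gamma$, which is (ii) with $\eta = \bigwedge J$. For the inductivity constraints: a typical element of $S_\Rightarrow$ is $\bigl(c(\bigwedge J_1), d(\bigvee J_2)\bigr)$ for $(\bigwedge J_1, \bigvee J_2) \in I$; the ICE implication condition "$c(\bigwedge J_1) \models \gamma$ implies $d(\bigvee J_2) \models \gamma$" is, by contraposition, equivalent to "$c(\bigwedge J_1) \not\models \gamma$ or $d(\bigvee J_2) \models \gamma$", and applying part (a) to the first disjunct and part (b) to the second turns this into "$\not\vdash_\calB \bigwedge J_1 \Rightarrow \gamma$ or $\not\vdash_\calB \gamma \Rightarrow \bigvee J_2$", which is exactly (iii). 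Since the maps $\bigvee J \mapsto d(\bigvee J)$, $\bigwedge J \mapsto c(\bigwedge J)$, and $(\bigwedge J_1, \bigvee J_2) \mapsto (c(\bigwedge J_1), d(\bigvee J_2))$ are precisely how $S_+, S_-, S_\Rightarrow$ are populated in Definition~\ref{def:ICE_sample}, the conjunction of the three ICE conditions holds iff the conjunction of the three CD-NPI conditions holds, giving the biconditional.

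There is essentially no hard step here: the argument is a direct translation, bullet-by-bullet, once Lemma~\ref{lem:pos-boolean} is in hand. The only mild care needed is the contraposition in the inductivity case — one must remember that the ICE sample records implications $v_1 \models \varphi \Rightarrow v_2 \models \varphi$ while the CD-NPI inductivity constraint is stated as a disjunction of two non-provability facts, so these align only after rewriting the implication as "$v_1 \not\models \varphi$ or $v_2 \models \varphi$" and matching $v_1 \not\models \varphi$ (not $v_1 \models \varphi$) with the first disjunct via part (a). It is also worth noting explicitly that both directions of the theorem come out simultaneously from this equivalence-preserving rewriting, so no separate converse argument is needed.
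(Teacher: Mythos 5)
Your proof is correct and follows essentially the same route as the paper's: a constraint-by-constraint case analysis using Lemma~\ref{lem:pos-boolean} to translate between $\models$ statements on the valuations $c(\bigwedge J)$, $d(\bigvee J)$ and the non-provability conditions. The only cosmetic difference is that you observe the biconditional falls out of the equivalence-preserving rewriting in one pass, whereas the paper spells out the two directions separately for each constraint type.
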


\begin{proof} 
Let $\mathfrak S = (W, S, I)$ be an CD-NPI sample, and let $\mathcal S_\mathfrak S = (S_+, S_-, S_\Rightarrow)$ the ICE sample as in Definition~\ref{def:ICE_sample}. Moreover, let $\gamma$ be a positive Boolean formula.
We prove Theorem~\ref{thm:ICE_correct} by considering each weakening, strengthening, and inductivity constraint together with their corresponding positive, negative, and implication examples individually.

\begin{itemize}[itemsep=.5\baselineskip]
	\item Pick a weakening constraint $\bigvee J \in W$, and let $v \in S_+$ with $v = d(\bigvee J)$ be the corresponding positive sample. Moreover, assume that $\gamma$ is consistent with $\mathfrak S$ and, thus, $\not \vdash_\calB \gamma \Rightarrow  \bigvee J$. By Lemma~\ref{lem:pos-boolean}, this is true if and only if $d \bigl( \bigvee J \bigr) \models \gamma$. Hence, $v \models \gamma$.
	
	Conversely, assume that $\gamma$ is consistent with $\mathcal S$. Thus, $v \models \gamma$, which means $d \bigl( \bigvee J \bigr) \models \gamma$. By Lemma~\ref{lem:pos-boolean}, this is true if and only if $\not \vdash_\calB \gamma \Rightarrow  \bigvee J$.
	\item
	Pick a strengthening constraint $\bigwedge J \in S$, and let $v \in S_-$ with $v = c(\bigwedge J)$ be the corresponding negative sample. Moreover, assume that $\gamma$ is consistent with $\mathfrak S$ and, thus, $\not\vdash_\calB \bigwedge J \Rightarrow \gamma$. By Lemma~\ref{lem:pos-boolean}, this is true if and only if $c \bigl( \bigwedge J \bigr) \not \models \gamma$. Hence, $v \not\models \gamma$.
	
	Conversely, assume that $\gamma$ is consistent with $\mathcal S$. Thus, $v \not\models \gamma$, which means $c \bigl( \bigwedge J \bigr) \not \models \gamma$. By Lemma~\ref{lem:pos-boolean}, this is true if and only if $\not\vdash_\calB \bigwedge J \Rightarrow \gamma$.
	\item
	Following the definition of implication, we split the proof into two cases, depending on whether $\not\vdash_\calB \bigwedge J \Rightarrow \gamma$ or $\not\vdash_\calB \gamma \Rightarrow \bigvee J$ (and $v_1 \not\models \gamma$ or $v_2 \models \gamma$ for the reserve direction). However, the proof in the former case is the same as the proof for strengthening constraints, while the proof of latter case is the same as the proof for weakening. Hence, combining both proofs immediately yields the claim. \qed
\end{itemize}
\end{proof}

\subsubsection*{ICE learners for Boolean formulas}
The reduction above allows us to use any ICE learning algorithm in the literature that
synthesizes positive Boolean formulas. As we mentioned earlier, we can add the negations
of predicates as first-class predicates, and hence synthesize invariants over the
class of all Boolean combinations of a finite set of predicates as well.

The problem of passive ICE learning for one round, synthesizing a formula that satisfies
the ICE sample, can usually be achieved efficiently and in a variety of ways. However,
the crucial aspect is not the complexity of learning in one round, but the \emph{number}
of rounds it takes to converge to an adequate invariant that proves the program correct.
When the set $\calP$ of candidate predicates is large (hundreds in our experiments),
since the number of Boolean formulas over $\mathcal P$ is doubly exponential in $n = |\calP|$, building
an effective learner is not easy.
However, there is one class of formulas that are particularly amenable to efficient
ICE learning---learning \emph{conjunctions of predicates over $\mathcal P$}. In this case,
there are ICE learning algorithms that promise learning the invariant (provided one
exists expressible as a conjunct over $\mathcal P$) in $n+1$ rounds.  Note that this learning
is essentially finding an invariant in a hypothesis class $\calH$ of size $2^n$ in $n+1$
rounds.

\houdini~\cite{DBLP:conf/fm/FlanaganL01} is such a learning algorithm for conjunctive formulas. Though it is 
typically seen as a particular way to synthesize invariants, it is a prime example of an ICE learner for conjuncts, as described in the work by Garg~et~al.~\cite{DBLP:conf/cav/0001LMN14}.
In fact, Houdini is similar to the classical PAC learning algorithm for 
conjunctions~\cite{Kearns:1994:ICL:200548}, but extended to the ICE model by handling implication counterexamples.
More precisely, given an ICE sample $\mathcal S = (S_+, S_-, S_\Rightarrow)$, \houdini computes the largest conjunctive formula $\varphi$ in terms of the number of Boolean variables occurring in $\varphi$ (i.e., the semantically strongest conjunctive formula) that is consistent with $\mathcal S$ in the following way. First, it computes the largest conjunction $\varphi$ that is consistent with the positive examples (i.e., $v \models \varphi$ for all $v \in S_+$); note that this conjunction is unique. Next, \houdini checks whether the implications are satisfied. If this is not the case, then we know for each non-satisfied implication $(v_1, v_2) \in S_\Rightarrow$ that $v_2$ has to be classified positively because $v_1$ belongs to every set that includes $S_+$. Hence, \houdini adds all such $v_2$ to $S_+$, resulting in a new set $S'_+$. Subsequently, it constructs the largest conjunction $\varphi'$ that is consistent with the positive examples in $S'_+$ (i.e., $v \models \varphi'$ for all $v \in S'_+$). \houdini repeats this procedure until it arrives at the largest conjunctive formula $\varphi^\ast$ that is consistent with $S_+$ and $S_\Rightarrow$ (again, note that this set is unique). Finally, \houdini checks whether each negative example violates $\varphi^\ast$  (i.e., $v \not\models \varphi^\ast$ for all $v \in S_-$). If this is the case, $\varphi^\ast$ is the largest conjunctive formula over $\mathcal B$ that is consistent with $\mathcal S$; otherwise, no consistent conjunctive formula exists.
The time \houdini spends in each round is \emph{polynomial} and, furthermore, when used in an iterative setting, is guaranteed to converge in at most $n+1$ rounds or report that no conjunctive invariant over $\mathcal P$ exists.
We use this ICE learner to build a CD-NPI learner for conjunctions.

\subsection{An Illustrative Example} \label{sec:illustrative-example}

\begin{figure}[t]
\begin{lstlisting}[mathescape=true,basicstyle=\small]
int A[], B[];    
int N;    axiom ($N > 0$);
bool inImage(int i) { return true; }

procedure inverse()
requires ($\forall x, y. ~0 \leq x < y < N 
                           \implies A[x] \not= A[y]$);  // A is injective 
requires ($\forall x. ~0 \leq x < N \wedge inImage(x) \implies (\exists y. ~0 \leq y < N \wedge A[y] = x)$); // A is surjective 
ensures ($\forall x, y. ~0 \leq x < y < N \implies B[x] \not= B[y]$);  // B is injective
{
  int i = 0;
  while (i < N)
  SynthesizeInv($\forall x. ~0 \leq x < i \implies B[A[x]] = x$); // $b_1$
  {
  	B[A[i]] = i;    
  	i = i + 1;
  }
  SynthesizeInv($\forall x. ~0 \leq x < N \implies A[B[x]] = x$, // $b_2$
                  $\forall x. ~0 \leq x < N \wedge inImage(x) \implies A[B[x]] = x$); // $b_3$
  return;
}

\end{lstlisting}

\caption{Synthesizing invariants for the program that constructs an inverse B of an injective, surjective function A~\cite{vscomp2010}.} \label{fig:example}
\end{figure}

Figure~\ref{fig:example} illustrates an example program of the verified software competition~\cite{vscomp2010} that given an injective, surjective function A returns the inverse B of the function A. The post-condition of this program expresses that the function B is injective. To prove this program correct, one needs to specify adequate invariants at the loop header and before the return statement in the function \textit{inverse} in the program. We wish to synthesize these invariants.
For simplicity, let us assume we are provided a small set of predicates as building blocks of the invariants to synthesize---$b_1$ for the loop invariant and $b_2$, $b_3$ for the invariant at the return statement. Our task, therefore, is to synthesize adequate invariants for this program over these predicates.\kern-.06em\footnote{In general, one starts with a much larger set of candidate predicates that are automatically generated using program/specification-dependent heuristics.}

Clearly, the verification conditions of this program are undecidable. In fact, the constant Boolean function \textit{inImage} is crucially required to validate certain verification conditions in \textsc{Boogie} because it triggers appropriate quantifier instantiations in the surjectivity condition.

Now, suppose the learner conjectures the loop invariant $\gamma_L = b_1$ and the invariant at the return statement $\gamma_R = b_2 \wedge b_3$. Moreover, suppose that the verification condition along the path from the loop exit to the return statement, though valid in the undecidable theory $\calU$ (cf.\ Figure~\ref{fig:framework}), is not provable in the decidable theory $\calD$ (one that has instantiated quantifiers with ground terms). 
The $\calD$-solver returns a model for the negation of the verification condition that captures this non-provability information. The verification engine gleans this model---it looks for the values assigned to the predicate variables in the model, and from this information constructs, in general, a CD-NPI constraint for the learner to learn from. For this particular verification, the verification engine extracts a pair of formulas $( \eta, \chi )$ where $\eta = b_1$ and $\chi = b_2$, and communicates this as an inductivity constraint to the learner. Intuitively, this  constraint means that the verification condition obtained by substituting $\gamma_L$ with $\eta$ and $\gamma_R$ with $\chi$ is itself not provable. Hence, in subsequent rounds, the learner needs to conjecture only such invariants where $\gamma_L$ is not weaker than $\eta$ (i.e., $\not \vdash_\calB b_1 \Rightarrow \gamma_L$) or $\gamma_R$ is not stronger than $\chi$ (i.e., $\not \vdash_\calB \gamma_R \Rightarrow b_2$). 

The learner works by reducing the CD-NPI passive learning problem to ICE learning over a sample over the given set of predicates. Concretely, the inductivity constraint $(b_1, b_2)$ is reduced to an implication constraint $((1,0,0), (1,0,1))$ in the ICE setting, where each datapoint in the ICE sample has values for the predicates $b_1$, $b_2$, and $b_3$, respectively.
In the next round, let us assume the learner conjectures the invariants $\gamma_L = b_1$ and $\gamma_R = b_3$. Note these conjectures satisfy both the ICE constraints and the CD-NPI constraints. In this case, it turns out that the verification conditions along all program paths using these invariants can be proved valid in the theory $\calD$. As a result, our invariant synthesis procedure terminates with $\gamma_L$ and $\gamma_R$ as adequate inductive invariants.

\subsection{Main Result}
\label{sec:framework-correctness}
To state the main result of this paper, let us assume that the set $\calP$ of predicates is finite. We comment on the case of infinitely many predicates below.

\begin{theorem} \label{thm:CD-NPI-main-result}
Assume a normal verification engine for a program $P$ to be given. Moreover, let $\calP$ be a finite set of predicates over the variables in $P$ and $\calH$ a hypothesis class consisting of positive Boolean combinations of predicates in $\calP$. 
If there exists an annotation in $\calH$ that the verification engine can use to prove $P$ correct, then the CD-NPI framework described in Section~\ref{sec:CDNPI} is guaranteed to converge to such an annotation in finite time.
\end{theorem}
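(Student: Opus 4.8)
The plan is to turn the per-round analysis of Section~\ref{sec:CDNPI} into a global convergence argument by establishing three facts about the iterative loop of Figure~\ref{fig:framework} and combining them: (i) \emph{progress} --- if a round does not already prove $P$, the conjecture $\gamma$ of that round is \emph{inconsistent} with the CD-NPI constraint the engine returns, so that constraint is genuinely new; (ii) \emph{soundness} --- every annotation $\gamma^\ast\in\calH$ that the engine can use to prove $P$ stays consistent with \emph{every} CD-NPI constraint ever produced; and (iii) \emph{termination} --- since $\calP$ is finite, only finitely many rounds can occur. I will carry out the argument following the single-annotation-hole presentation of Section~\ref{sec:CDNPI} (the general case being treated componentwise, the learner conjecturing a tuple of annotations and recording each constraint against the component whose Hoare triple failed).

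For \emph{progress} I would do a case split on the shape of the failing Hoare triple and invoke Lemma~\ref{lem:lemma_1}. If $\{\alpha\}s\{\gamma\}$ fails and $\vec v'$ is the post-state ghost valuation of the witnessing $\calD$-model, the returned weakening constraint is $\chi=\bigvee_{i\mid v'_i=\mathit{false}}p_i$; since $\vec v'\not\models\gamma$ (the post-assertion fails in the model), the second item of Lemma~\ref{lem:lemma_1} gives $\vdash_\calB\gamma\Rightarrow\chi$, so $\gamma$ violates the consistency requirement $\not\vdash_\calB\gamma\Rightarrow\chi$. The strengthening case ($\{\gamma\}s\{\beta\}$ fails, $\vec v\models\gamma$, hence $\vdash_\calB(\bigwedge_{i\mid v_i=\mathit{true}}p_i)\Rightarrow\gamma$) and the inductivity case ($\{\gamma\}s\{\gamma\}$ fails, hence both $\vdash_\calB\eta\Rightarrow\gamma$ and $\vdash_\calB\gamma\Rightarrow\chi$) are entirely analogous. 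Since the learner only ever conjectures annotations consistent with all previously recorded constraints, this new constraint was not already in the sample, so the sample strictly grows in every non-terminating round.

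For \emph{soundness}, fix a $\gamma^\ast\in\calH$ that the engine can use to prove $P$, i.e.\ $\apprx(VC(\cdot))$ is $\calD$-valid for every Hoare triple of $P$ with $\gamma^\ast$ in the hole. I would reuse the observation already made in Section~\ref{sec:CDNPI}: for a weakening constraint $\chi$ extracted from a failed $\{\alpha\}s\{\gamma\}$ the same $\calD$-model witnesses that $\{\alpha\}s\{\chi\}$ is unprovable; were $\vdash_\calB\gamma^\ast\Rightarrow\chi$, Condition~\ref{itm:normal_precondition} of normality would force $\{\alpha\}s\{\gamma^\ast\}$ to be unprovable, contradicting adequacy of $\gamma^\ast$, so $\not\vdash_\calB\gamma^\ast\Rightarrow\chi$. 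Strengthening constraints are symmetric via Condition~\ref{itm:normal_postcondition}. For an inductivity constraint $(\eta,\chi)$ from a failed $\{\gamma\}s\{\gamma\}$, the same model makes $\{\eta\}s\{\chi\}$ unprovable; if both $\vdash_\calB\eta\Rightarrow\gamma^\ast$ and $\vdash_\calB\gamma^\ast\Rightarrow\chi$ held, first strengthening the postcondition to $\gamma^\ast$ (Condition~\ref{itm:normal_precondition}) and then weakening the precondition to $\gamma^\ast$ (Condition~\ref{itm:normal_postcondition}) would make $\{\gamma^\ast\}s\{\gamma^\ast\}$ unprovable, again a contradiction; hence $\gamma^\ast$ meets at least one disjunct and is consistent with $(\eta,\chi)$. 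Induction on the rounds then shows $\gamma^\ast$ is consistent with the entire accumulated sample at every point.

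To conclude, for termination: since $\calP$ is finite with $n=|\calP|$, there are at most $2^{2^n}$ positive Boolean combinations of $\calP$ up to $\calB$-equivalence. By soundness the sample is satisfiable at every round (witnessed by $\gamma^\ast$), so the CD-NPI learner --- which exists by Theorem~\ref{thm:ICE_correct} together with any ICE learner for positive Boolean formulas (brute force in general, \houdini for the conjunctive subclass) --- always returns \emph{some} consistent conjecture and never reports ``no consistent annotation exists''. By progress each such conjecture is immediately excluded by a fresh constraint that, by monotonicity of the sample, keeps excluding it thereafter; hence no $\calB$-equivalence class is conjectured twice and the loop performs at most $2^{2^n}$ non-terminating rounds. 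Since it cannot terminate by failure, it terminates because the engine discharges all VCs in some round, and by soundness of the $\calD$-reduction that final annotation proves $P$ in $\calU$. The step I expect to need the most care is the soundness argument --- in particular the claim that the $\calD$-model witnessing a failed VC also witnesses the non-provability of the weakened/strengthened Hoare triple, which must be justified from the ghost-variable instrumentation and from $\apprx$ acting on the pre- and post-assertions independently; the remaining steps are essentially bookkeeping over the finite hypothesis class.
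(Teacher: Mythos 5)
Your proposal is correct and follows essentially the same three-part structure as the paper's own proof: what you call \emph{soundness} is the paper's ``honesty'' of the engine (proved identically by contraposition of the two normality conditions, with the inductivity case obtained by combining the other two), your \emph{progress} step matches the paper's argument that a consistent learner never repeats a semantically equivalent conjecture, and the finiteness-of-$\calH$ pigeonhole argument for convergence is the same. The extra detail you supply --- invoking Lemma~\ref{lem:lemma_1} to show the failed conjecture violates the freshly returned constraint, and noting that satisfiability of the sample (witnessed by $\gamma^\ast$) prevents the learner from ever reporting failure --- is a welcome sharpening of steps the paper states more briefly, but it does not constitute a different route.
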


\begin{proof}[Proof of Theorem~\ref{thm:CD-NPI-main-result}]
The proof proceeds in two steps. First, we show that a normal verification engine is \emph{honest}, meaning that the non-provability information returned by such an engine does not rule out any adequate and provable annotation. Second, we show that any consistent learner (i.e., a learner that only produces consistent hypotheses), when paired with an honest verification engine, makes \emph{progress} from one round to another. Finally, we combine both results to show that the framework eventually converges to an adequate and provable annotation.

\paragraph{Honesty of the verification engine}
We show honesty of the verification engine individually for each type of constraint by contradiction.
\begin{itemize}
	\item Suppose that the verification replies to a candidate invariant $\gamma$ proposed by the learner with a weakening constraint $\chi$ because it could not prove the validity of the Hoare triple $\{ \alpha \} s \{ \gamma \}$. This effectively forces any future conjecture $\gamma'$ to satisfy $\not\vdash_\calB \gamma' \Rightarrow \chi$.
	
	Now, suppose that there exists an invariant $\delta$ such that $\vdash_\calB \delta \Rightarrow \chi$ and the verification engine can prove the validity of $\{ \alpha \} s \{ \delta \}$ (in other words, the adequate invariant $\delta$ is ruled out by the weakening constraint $\chi$). Due to the fact that the verification engine is normal (in particular, by contraposition of Part~\ref{itm:normal_precondition} of Definition~\ref{def:normal_oracle}), this implies that the verification engine can also prove the validity of $\{ \alpha \} s \{ \chi \}$. However, this is a contradiction to $\chi$ being a weakening constraint.
	\item Suppose that the verification engine replies to a candidate invariant $\gamma$ proposed by the learner with a strengthening constraint $\eta$ because it could not prove the validity of the Hoare triple $\{ \gamma \} s \{ \beta \}$. This effectively forces any future conjecture $\gamma$ to satisfy $\not\vdash_\calB \eta \Rightarrow \gamma'$.
	
	Now, suppose that there exists an invariant $\delta$ such that $\vdash_\calB \eta \Rightarrow \delta$ and the verification engine can prove the validity of $\{ \delta \} s \{ \beta \}$ (in other words, the adequate invariant $\delta$ is ruled out by the weakening constraint $\eta$). Due to the fact that the verification engine is normal (in particular, by contraposition of Part~\ref{itm:normal_postcondition} of Definition~\ref{def:normal_oracle}), this implies that the verification engine can also prove the validity of $\{ \eta \} s \{ \beta \}$. However, this is a contradiction to $\eta$ being a strengthening constraint.
	\item Combining the arguments for weakening and strengthening constraints immediately results in a contradiction for the case of inductivity constraints as well.
\end{itemize}

\paragraph{Progress of the learner}
Now suppose that the learning algorithm is consistent, meaning that it always produces an annotation that is consistent with the current sample. Moreover, assume that the sample in iteration $i \in \mathbb N$ is $\mathfrak S_i$ and the learner produces the annotation $\gamma_i$. If $\gamma_i$ is inadequate to prove the program correct, the verification engine returns a constraint $c$. The learner adds this constraint to the sample, obtaining the sample $\mathfrak S_{i+1}$ of the next iteration.

Since verification with $\gamma_i$ failed, which is witnessed by $c$, we know that $\gamma_i$ is not consistent with $c$. The next conjecture $\gamma_{i+1}$, however, is guaranteed to be consistent with $\mathfrak S_{i+1}$ (which contains $c$) because the learner is consistent. Hence, $\gamma_i$ and $\gamma_{i+1}$ are semantically different. Using this argument repeatedly shows that each annotation $\gamma_i$ that a consistent learner has produced is semantically different from any previous annotation $\gamma_j$ for $j < i$.

\paragraph{Convergence}
We first make two observations.
\begin{enumerate}
	\item \label{itm:proof:main-result:1} The number of semantically different hypotheses in the hypothesis space $\calH$ is finite because the set $\calP$ is finite. Recall that $\calH$ is the class of all positive Boolean combinations of predicates in $\calP$.
	\item \label{itm:proof:main-result:2} Due to the honesty of the verification engine, every annotation that the verification engine can use to prove the program correct is guaranteed to be consistent with any sample produced during the learning process.
\end{enumerate}

Now, suppose that there exists an annotation that the verification engine can use to prove the program correct.
Since the learner is consistent, all conjectures produced during the learning process are semantically different. Thus, the learner will at some point have exhausted all incorrect annotations in $\calH$ (due to Observation~\ref{itm:proof:main-result:1}). By assumption, however, there exists at least one annotation that the verification engine can use to prove the program correct. Moreover, any such annotation is guaranteed to be consistent with the current sample (due to Observation~\ref{itm:proof:main-result:2}). Thus, the annotation conjectured next is necessarily one that the verification engine can use to prove the program correct. \qed
\end{proof}

Under certain, realistic assumptions on the CD-NPI learning algorithm, Theorem~\ref{thm:CD-NPI-main-result} remains true even if the number of predicates is infinite. An example of such an assumption is that the learning algorithm always conjectures a smallest consistent annotation with respect to some fixed total order on $\calH$. In this case, one can show that such a learner will at some point have proposed all inadequate annotation up to the smallest annotation the verification engine can use to prove the program correct. It will then conjecture this annotation in the next iteration.
We refer the reader to Löding, Madhusudan, and Neider~\cite{DBLP:conf/tacas/LodingMN16} for details on further strategies that ensure convergence.


\section{Application: Learning Invariants that Aid Natural Proofs for Heap Reasoning}
\label{sec:framework}

We now develop an implementation of our learning framework for verification
engines based on natural proofs for heap reasoning~\cite{DBLP:conf/pldi/Qiu0SM13,DBLP:conf/pldi/PekQM14}.
We first provide some background on the separation logic \Dryad and natural proofs, which is a sound but incomplete verification procedure.
Then, we describe how to implement our verification framework using a natrual proofs verification engine. In particular, we describe how to automatically generate
suitable predicates for these programs, which serve as the building blocks of the invariants we seek to synthesize.
Finally, we present an empirical evaluation of our implementation on an extensive set of standard algorithms on dynamic data structures, such as searching, inserting, or deleting items in lists and trees.

\subsubsection*{Background: Natural Proofs and Dryad}
\label{sec:dryad}

\Dryad~\cite{DBLP:conf/pldi/Qiu0SM13,DBLP:conf/pldi/PekQM14} is a dialect of separation logic 
that comes with a heaplet semantics and allows expressing second order properties such as pointing to a list (or list segment) using recursive functions and predicates. 
The syntax of \Dryad is a standard separation logic syntax with a few restrictions, such as disallowing negations inside recursive definitions and in sub-formulas connected by spatial conjunction (see~\cite{DBLP:conf/pldi/PekQM14} for more details about the \Dryad syntax).  
\Dryad is expressive enough to state a variety of data-structures (singly and doubly linked lists, sorted lists, binary search trees,
AVL trees, maxheaps, treaps, etc.), recursive definitions over them that map to numbers (length, height, etc.), as well as data stored within
the heap (the multiset of keys stored in lists, trees, etc.).

Natural proofs~\cite{DBLP:conf/pldi/Qiu0SM13,DBLP:conf/pldi/PekQM14} is a sound but incomplete strategy for deciding satisfiability of \Dryad formulas. The first step of the natural proof verifier is to convert all predicates and functions in a \textsc{Dryad}-annotated program to \emph{classical logic}. This translation introduces \emph{heaplets} (modeled as sets
of locations) explicitly in the logic. Furthermore, it introduces assertions that demand that the
accesses of each method are contained in the heaplet implicitly defined by its precondition (taking into
account newly allocated or freed nodes), and that at the end of the program, the modified heaplet precisely 
matches the implicit heaplet defined by the post-condition. 

The second step of the natural proof verifier is to perform \emph{transformations}
on the program and translate it to \textsc{Boogie}~\cite{DBLP:conf/fm/FlanaganL01}, an intermediate verification language that handles
proof obligations using automatic theorem provers (typically SMT solvers).
This transformation essentially performs three tasks:
\begin{enumerate*}[label={(\alph*)}]
	\item it abstracts all recursive definitions on the heap using uninterpreted functions and introduces finite-depth unfoldings of recursive definitions at every place in the code where locations are dereferenced,
	\item it models heaplets and other sets using a decidable theory of maps, and
	\item it inserts \emph{frame reasoning} explicitly in the code that allows the verifier to derive that certain properties continue to hold across a heap update (or function call) using the heaplet that is modified.
\end{enumerate*}
The resulting \textsc{Boogie} program is a program with no recursive definitions, where all verification
conditions are in decidable logics, and where the logic engine can return models when formulas are satisfiable. 
The program can be verified if supplied with correct inductive loop-invariants and
adequate pre/post conditions. 

The described procedure has been implemented in a fully automatic tool, called \textsc{VCDryad}. \textsc{VCDryad} extends \textsc{VCC}~\cite{DBLP:conf/tphol/CohenDHLMSST09} and converts C programs annotated in \textsc{Dryad} to \textsc{Boogie} programs via the natural proof transformations described above.
It is important to note, however, that \textsc{VCC} introduces some quantification to define the memory model and semantics of C, but this does not typically derail decidable reasoning.
We refer the reader to~\cite{DBLP:conf/pldi/Qiu0SM13,DBLP:conf/pldi/PekQM14} for more details.

\subsubsection*{Learning Heap Invariants}
We have implemented a prototype of our CD-NPI framework over \textsc{VCDryad} and the \textsc{Boogie} program verifier. This prototype takes a C program annotated in \textsc{Dryad} as input and uses \textsc{VCDryad} to convert it to a \textsc{Boogie} program. Then, it applies our transformation to the ICE learning framework and automatically generates a set $\calP$ of \emph{predicates} (as described shortly), which serve as the basic building blocks of our invariants.
Finally, it pairs the \textsc{Boogie} verifier with an invariant synthesis engine, \textsc{Houdini} in our case, to learn an inductive invariant. Note that after the \textsc{VCDryad}-transformation, \textsc{Boogie} satisfies the requirements on verification engines of our framework.

The set $\calP$ of predicates is generated from generic templates, shown in Figure~\ref{fig:predicates}, which are instantiated using all combinations of program variables that occur in the program being verified.
The templates define a fairly exhaustive set of predicates, including
\begin{itemize}
	\item properties of the store (equality of pointer variables, equality and inequalities between integer variables, etc.),
	\item shape properties (singly and doubly linked lists and list segments, sorted lists, trees, BST, AVL, treaps, etc.),
	\item and recursive definitions that map data structures to numbers (keys/data stored in a structure, lengths of lists and list segments, height of trees) involving arithmetic relationships and set relationships.
\end{itemize}
In addition, there are also predicates describing heaplets of various structures (with suffix \textit{\_heaplet}), involving set operations, disjointness, and equalities. The structures and predicates are extensible, of course, to any recursive definition expressed in \textsc{Dryad}.

\newcommand{\pf}{\textit{pf}}
\newcommand{\df}{\textit{df}}
\begin{figure}[ht!] 
\def\arraystretch{1.3}

\begin{displaymath}
\begin{array}{l}

\begin{array}{c}
 x,y \in \textit{PointerVars}  \hspace{0.5cm}  \vec{x},\vec{y},\vec{z} \in \textit{PointerVars}^*  \hspace{0.5cm} \pf \in \textit{PointerFields} \hspace{0.5cm} \textit{key},~ \df \in \textit{DataFields}  \\[-0.2em]
i,j \in \textit{IntegerVars} ~\cup~ \{ 0, \textsf{IntMax}, \textsf{IntMin} \}
\end{array}

\\[0.6cm]

\begin{array}{lrll}
& \textit{listshape}(\vec{x}) & := & \textsf{LinkedList}(x_1)  ~~\mid~~  \textsf{DoublyLinkedList}(x_1) ~~\mid~~  \textsf{SortedLinkedList}(x_1) \\[-0.25em] &&& 
~~\mid~~  \textsf{LinkedListSeg}(x_1,x_2) ~~\mid~~ \textsf{DoublyLinkedListSeg}(x_1,x_2)   \\[-0.25em] &&& 
    ~~\mid~~ \textsf{SortedLinkedListSeg}(x_1,x_2)
\\[0.05cm]
&\textit{treeshape}(x) & := & \textsf{BST}(x) ~~\mid~~ \textsf{AVLtree}(x) ~~\mid~~ \textsf{Treap}(x) 
\\[0.05cm]
&\textit{shape}(\vec{x}) & := & \textit{listshape}(\vec{x}) ~~\mid~~ \textit{treeshape}(\vec{x}) 
\\[0.05cm]
&\textit{size}(\vec{x}) & := & \textit{listshape}\_\textsf{length}(\vec{x}) ~~\mid~~ \textit{treeshape}\_\textsf{height}(\vec{x})
\end{array}
\end{array}
\end{displaymath}

\begin{displaymath}
\hspace{1em}
\begin{array}{l}
\begin{tabu}{>{\centering\setlength\hsize{2.2cm}}X |[1pt]l}

\textit{Category}~~1& 
\begin{array}{@{\hspace{2ex}}>{\raggedright\arraybackslash$} p{6.8cm} <{$}  >{\raggedright\arraybackslash$} p{3cm} <{$} >{\raggedright\arraybackslash$} p{2cm} <{$} }
x = \textsf{nil}   & x = y   \\
 x \neq \textsf{nil}  &  x \neq y  \\
\textit{shape}(\vec{x})  & x.\pf = \textsf{nil}    \\  
x \in \textit{shape}\_\textsf{heaplet}(\vec{y}) &  x.\pf \neq \textsf{nil}   \\                                               
x \notin \textit{shape}\_\textsf{heaplet}(\vec{y})   & x.\pf = y  \\                                   
\textit{shape}\_\textsf{heaplet}(\vec{x})\cap\textit{shape}\_\textsf{heaplet}(\vec{y}) = \emptyset  & x.\pf \neq y  \\

\end{array}
\end{tabu}
\\[1.9cm]

\begin{tabu}{>{\centering\setlength\hsize{2.2cm}}X |[1pt]l} 
\textit{Category}~~2& 
\begin{array}{@{\hspace{2ex}}>{\raggedright\arraybackslash$} p{6.8cm} <{$}  >{\raggedright\arraybackslash$} p{2.5cm} <{$} >{\raggedright\arraybackslash$} p{2.5cm} <{$} }
i    \in \textit{shape}\_\textit{key}\_\textsf{set}(\vec{x})   & x.\df =    i  \\%
i \notin \textit{shape}\_\textit{key}\_\textsf{set}(\vec{x})   & x.\df \neq i  & \\   %
\textit{shape}\_\textit{key}\_\textsf{set}(\vec{x}) \leq_\textsf{set} \{i\}  & x.\df \leq i  \\ %
\textit{shape}\_\textit{key}\_\textsf{set}(\vec{x}) \geq_\textsf{set} \{i\}  &  x.\df \geq i  \\ %
\textit{shape}\_\textit{key}\_\textsf{set}(\vec{x}) \leq_\textsf{set} \{y.\df\}  & x.\df =    y.\df        \\
\textit{shape}\_\textit{key}\_\textsf{set}(\vec{x}) \geq_\textsf{set} \{y.\df\} & x.\df \neq y.\df    \\

\textit{shape}\_\textit{key}\_\textsf{set}(\vec{x}) = \textit{shape}\_\textit{key}\_\textsf{set}(\vec{y})     & x.\df \leq y.\df        \\                                          
\textit{shape}\_\textit{key}\_\textsf{set}(\vec{x}) \leq_\textsf{set} \textit{shape}\_\textit{key}\_\textsf{set}(\vec{y})  &  x.\df \geq y.\df     \\
\textit{shape}\_\textit{key}\_\textsf{set}(\vec{x}) \geq_\textsf{set} \textit{shape}\_\textit{key}\_\textsf{set}(\vec{y}) \\
\textit{shape}\_\textit{key}\_\textsf{set}(\vec{x}) = \textit{shape}\_\textit{key}\_\textsf{set}(\vec{y}) &  \\[-0.3em]
\hspace*{19ex} ~~\cup~~\textit{shape}\_\textit{key}\_\textsf{set}(\vec{z}) & 

\end{array}
\end{tabu}

\\[3.2cm]

\begin{tabu}{>{\centering\setlength\hsize{2.2cm}}X |[1pt]l}
\textit{Category}~~3& 
\begin{array}{@{\hspace{2ex}}>{\raggedright\arraybackslash$} p{6.8cm} <{$}  >{\raggedright\arraybackslash$} p{5cm} <{$} >{\raggedright\arraybackslash$} p{3cm} <{$} } 
\textit{size}(\vec{x}) = i - j & \textit{size}(\vec{x}) = i      \\
\textit{size}(\vec{x}) - \textit{size}(\vec{y}) = i  & \textit{size}(\vec{x}) \leq i      \\
\textit{size}(\vec{x}) - \textit{size}(\vec{y}) = i - j &  \textit{size}(\vec{x}) \geq i   \\ 
\end{array}
\end{tabu}
\end{array}
\end{displaymath}

\caption{Templates for generating predicates. The operator $\leq_\textsf{set}$ denotes comparison between integer sets, where $A \leq_\textsf{set} B$ if and only if $\forall x \in A . \forall y \in B .~ x \leq y$.
The operator $\geq_\textsf{set}$ is similarly defined.
Shape properties such as \textsf{LinkedList}, \textsf{AVLtree}, etc., are recursively defined in \Dryad, separately,
and is extensible to any class of \Dryad defined shapes. Similarly, the definitions related to keys stored in a datastructure and the sizes of datastructures also stem from recursive definitions of them in \Dryad{}.
} \label{fig:predicates}

\end{figure}

The predicates are grouped into three categories, roughly in increasing complexity. Category~1 predicates involve shape-related properties, Category~2 involves properties related to the keys stored in the data-structure, and Category~3 predicates involve size-predicates on data structures (lengths of lists and heights of trees). 
Given a program to verify and its annotations, we choose the category of predicates depending on whether the specification refers to shape only, shapes and keys, or shapes, keys, and sizes (choosing a category includes the predicates of lower category as well). Then, predicates are automatically generated by instantiating the templates with all (combinations of) program variables. This approach allows for a fine-grained control over the predicates that are generated for a specific program and prevents the set of predicates from growing too large.

\subsubsection*{Evaluation}
\label{sec:evaluation}

We have evaluated our prototype on ten benchmark suits (82 routines in total) that contain standard algorithms on dynamic data structures, such as searching, inserting, or deleting items in lists and trees. These benchmarks were taken from the following sources:
\begin{enumerate*}[label={(\arabic*)}]
	\item	GNU C Library(glibc) singly/sorted linked lists,
	\item GNU C Library(glibc) doubly linked lists, 
	\item OpenBSD SysQueue,
	\item \textsc{GRASShopper}~\cite{DBLP:conf/cav/PiskacWZ13} singly linked lists,
	\item \textsc{GRASShopper}~\cite{DBLP:conf/cav/PiskacWZ13} doubly linked lists,
	\item \textsc{GRASShopper}~\cite{DBLP:conf/cav/PiskacWZ13} sorted linked lists,
	\item \textsc{VCDryad}~\cite{DBLP:conf/pldi/PekQM14} sorted linked lists,
	\item \textsc{VCDryad}~\cite{DBLP:conf/pldi/PekQM14} binary search trees, AVL trees, and treaps,
	\item AFWP~\cite{DBLP:conf/cav/ItzhakyBINS13} singly/sorted linked lists, and
	\item ExpressOS~\cite{DBLP:conf/asplos/MaiPXKM13} MemoryRegion.
\end{enumerate*}
The specifications for these programs are generally checks for their full functional correctness, such as 
preserving or altering shapes of data structures, inserting or deleting keys, filtering or finding elements, and sortedness of elements.
The specifications hence involve  separation logic with arithmetic as well as recursive definitions
that compute numbers (like lengths and heights), data-aggregating recursive functions (such as multisets
of keys stored in data-structures), and complex combinations of these properties
(e.g., to specify binary search trees, AVL trees and treaps). 
All programs are annotated in \textsc{Dryad}, and checking validity
of the resulting verification conditions is undecidable.

To create our benchmarks, we first picked all programs that contained iterative loops, 
\emph{erased} the user-provided loop invariants, and used our framework to synthesize adequate inductive invariants (our tool can synthesize multiple invariants for a program).
We also selected some programs that were purely recursive, where the contract for the function
had been strengthened to make the verification succeed. 
We \emph{weakened} these contracts to only state the specification (typically by removing formulas in the post-conditions
of recursively-called functions) and introduced annotation holes instead.
The goal was to synthesize strengthenings of these contracts that allow proving the program correct.
We also chose five straight-line programs, deleted their post-conditions, and evaluated
whether we can learn post-conditions for them.
Since our conjunctive learner learns the strongest invariant expressible as a conjunct,
we can use our framework to synthesize post-conditions as well.

After removing annotations from the benchmarks, we automatically inserted appropriate predicates over which to build invariants and contracts as described above. For all benchmark suits, conjunctions of these predicates were sufficient to prove the program correct.

\paragraph{Experimental Results}
We performed all experiments in a virtual machine running Ubuntu 16.04.1 on a single core of an Intel Core i7-7820\,HK 2.9\,GHz CPU with 2\,GB memory.
The box plots in Figure~\ref{fig:results-VCDryad} summarize the results of this empirical evaluation aggregated by benchmark suite, specifically the time required to verify the programs, the number of base predicates, and the number iterations in the learning process (see Appendix~\ref{app:experimental-results-heap-benchmarks} for full details).
Each box in the diagrams shows the lower and upper quartile (left and right border of the box, respectively), the median (line within the box), as well as the minimum and maximum (left and right whisker, respectively).

\pgfplotsset{smallboxplot/.style={
	width=52.5mm, height=57.5mm,
	ytick={1,...,10},
	minor tick style={draw=none},
	cycle list name = smallboxcyclelist,
}}

\pgfplotscreateplotcyclelist{smallboxcyclelist}{
	draw=black, fill=black!7.5 \\
	draw=black!60, fill=black!7.5 \\
}

\begin{figure}[th]

\makebox[\textwidth][c]
{
	\begin{tikzpicture}
		\begin{axis}[
			smallboxplot,
			name = time_plot,
			xlabel = {Time in s},
			ylabel = {Benchmark suite},
			xmode=log,
			yticklabels = {1 \textit{(16/3)}, 2 \textit{(9/3)}, 3 \textit{(3/1)}, 4 \textit{(8/1)}, 5 \textit{(8/1)}, 6 \textit{(11/2)}, 7 \textit{(3/2)}, 8 \textit{(11/3)}, 9 \textit{(9/2)}, 10 \textit{(4/1)}},
			]
			
			\addplot+[boxplot prepared={
				lower whisker=0.7, lower quartile=1.175,
				median=5.5,
				upper quartile=55.675, upper whisker=556.1,
				},] coordinates {};

			\addplot+[boxplot prepared={
				lower whisker=0.3, lower quartile=0.9,
				median=1.1,
				upper quartile=6.1, upper whisker=20.2,
				},] coordinates {};

			\addplot+[boxplot prepared={
				lower whisker=10.2, lower quartile=10.5,
				median=10.8,
				upper quartile=13.45, upper whisker=16.1,
				},] coordinates {};

			\addplot+[boxplot prepared={
				lower whisker=0.2, lower quartile=0.675,
				median=1.1,
				upper quartile=2.1, upper whisker=5.9,
				},] coordinates {};

			\addplot+[boxplot prepared={
				lower whisker=0.2, lower quartile=0.675,
				median=1.3,
				upper quartile=3.475, upper whisker=7.1,
				},] coordinates {};

			\addplot+[boxplot prepared={
				lower whisker=0.9, lower quartile=14.35,
				median=69.7,
				upper quartile=138.55, upper whisker=1679.7,
				},] coordinates {};

			\addplot+[boxplot prepared={
				lower whisker=1.1, lower quartile=4.45,
				median=7.8,
				upper quartile=55.5, upper whisker=103.2,
				},] coordinates {};

			\addplot+[boxplot prepared={
				lower whisker=0.2, lower quartile=0.2,
				median=0.6,
				upper quartile=141.1, upper whisker=599,
				},] coordinates {};

			\addplot+[boxplot prepared={
				lower whisker=0.1, lower quartile=1.8,
				median=3,
				upper quartile=9.6, upper whisker=339.3,
				},] coordinates {};

			\addplot+[boxplot prepared={
				lower whisker=0.1, lower quartile=0.175,
				median=0.25,
				upper quartile=1.625, upper whisker=5.6,
				},] coordinates {};
				
			\end{axis}
			
		\begin{axis}[
			smallboxplot,
			yticklabels={},
			name = predicate_plot,
			at={(time_plot.south east)}, xshift=2.5mm,
			xlabel = {Base predicates},
			xmode=log,
			]

			\addplot+[boxplot prepared={
				lower whisker=22, lower quartile=79.5,
				median=199.5,
				upper quartile=371, upper whisker=795,
				},] coordinates {};

			\addplot+[boxplot prepared={
				lower whisker=22, lower quartile=48,
				median=88,
				upper quartile=237, upper whisker=342,
				},] coordinates {};

			\addplot+[boxplot prepared={
				lower whisker=5, lower quartile=5,
				median=5,
				upper quartile=11.5, upper whisker=18,
				},] coordinates {};

			\addplot+[boxplot prepared={
				lower whisker=22, lower quartile=22,
				median=63,
				upper quartile=63, upper whisker=140,
				},] coordinates {};

			\addplot+[boxplot prepared={
				lower whisker=22, lower quartile=52.75,
				median=63,
				upper quartile=82.25, upper whisker=140,
				},] coordinates {};

			\addplot+[boxplot prepared={
				lower whisker=40, lower quartile=127.5,
				median=153,
				upper quartile=496, upper whisker=496,
				},] coordinates {};

			\addplot+[boxplot prepared={
				lower whisker=40, lower quartile=71,
				median=102,
				upper quartile=151.5, upper whisker=201,
				},] coordinates {};

			\addplot+[boxplot prepared={
				lower whisker=9, lower quartile=14,
				median=25,
				upper quartile=70, upper whisker=80,
				},] coordinates {};

			\addplot+[boxplot prepared={
				lower whisker=5, lower quartile=63,
				median=63,
				upper quartile=201, upper whisker=416,
				},] coordinates {};
			
			\addplot+[boxplot prepared={
				lower whisker=7, lower quartile=19.75,
				median=24,
				upper quartile=30.75, upper whisker=51,
				},] coordinates {};

			\end{axis}
			
		\begin{axis}[
				smallboxplot,
				yticklabels={},
				name = invariant_plot,
				xlabel = {Iterations},
				at={(predicate_plot.south east)}, xshift=2.5mm,
				xmode=log,
				]

			\addplot+[boxplot prepared={
				lower whisker=15, lower quartile=24.5,
				median=70.5,
				upper quartile=125.25, upper whisker=279,
				},] coordinates {};

			\addplot+[boxplot prepared={
				lower whisker=15, lower quartile=18,
				median=26,
				upper quartile=68, upper whisker=99,
				},] coordinates {};

			\addplot+[boxplot prepared={
				lower whisker=3, lower quartile=3,
				median=3,
				upper quartile=5, upper whisker=7,
				},] coordinates {};

			\addplot+[boxplot prepared={
				lower whisker=14, lower quartile=15,
				median=28.5,
				upper quartile=33, upper whisker=63,
				},] coordinates {};

			\addplot+[boxplot prepared={
				lower whisker=14, lower quartile=23.25,
				median=31.5,
				upper quartile=38, upper whisker=63,
				},] coordinates {};

			\addplot+[boxplot prepared={
				lower whisker=16, lower quartile=33,
				median=53,
				upper quartile=116.5, upper whisker=165,
				},] coordinates {};

			\addplot+[boxplot prepared={
				lower whisker=19, lower quartile=23.5,
				median=28,
				upper quartile=43.5, upper whisker=59,
				},] coordinates {};

			\addplot+[boxplot prepared={
				lower whisker=5, lower quartile=6,
				median=6,
				upper quartile=16.5, upper whisker=28,
				},] coordinates {};

			\addplot+[boxplot prepared={
				lower whisker=5, lower quartile=34,
				median=36,
				upper quartile=65, upper whisker=106,
				},] coordinates {};

			\addplot+[boxplot prepared={
				lower whisker=5, lower quartile=8,
				median=12.5,
				upper quartile=16, upper whisker=16,
				},] coordinates {};
				
			\end{axis}
			
	\end{tikzpicture}
	
}

	\vskip -.25\baselineskip
	\caption{Experimental evaluation of our prototype. The numbers in italic brackets shows the total number or programs in the suite (first number) and the maximum predicate category used (second number).}
	\label{fig:results-VCDryad}

\end{figure}

Our prototype was successful in learning invariants and contracts for all 82 benchmarks. 
Moreover, the fact that the median time for a great majority of benchmarks suits is less than 10\,s shows that our technique is extremely effective in finding inductive \textsc{Dryad} invariants.
We also observe that despite many examples having hundreds of base predicates, which in turn suggests a worst-case complexity of hundreds of iterations, the learner was able to learn with much fewer iterations and the number of predicates in the final invariant is small. 
This shows that the non-provability information provided by the natural proof engine provides much more information than what the worst-case suggests.

To the best of our knowledge, our prototype is the only tool currently able of fully automatically verifying this challenging benchmark set.
We must emphasize, however, that there are subsets of our benchmarks that can be solved by reformulating verification in decidable fragments of separation logic studied---we refer the reader to the related work in Section~\ref{sec:introduction} for a survey of such work. 
Our goal in this evaluation, however, is not to compete with other, mature tools on a subset of benchmarks, but to measure the efficacy of our proposed CD-NPI based invariant synthesis framework on the whole benchmark set.


\section{Application: Learning Invariants in the Presence of Bounded Quantifier Instantiation} \label{sec:quantified_fo}
Software verification of numerous applications must deal with quantification. 
For instance, quantifiers are often needed for axiomatizing theories that are not already equipped with decision procedures, for specifying properties of unbounded data structures and dynamically allocated memory, as well as for defining recursive properties of programs. 
For instance, the power of two function can be defined recursively using quantifiers as
\[ \mathit{pow2}(0) = 1 ~\text{and}~ \forall n \in \mathbb N \colon n > 0 \Rightarrow \mathit{pow2}(n) = 2 \cdot \mathit{pow2}(n-1). \]

Despite the fact that various important first-order theories are undecidable (e.g., the first-order theory of arithmetic with uninterpreted functions), modern SMT solvers implement a host of heuristics to cope with quantifier reasoning. Quantifier instantiation, including pattern-based quantifier instantiation (e.g., E-matching~\cite{DBLP:journals/jacm/DetlefsNS05}) and model-based quantifier instantiation~\cite{DBLP:conf/cav/GeM09}, are particularly effective heuristics in this context. The key idea of instantiation-based heuristics is to instantiate universally quantified formulas with a finite number of ground terms and then check for validity of the resulting quantifier-free formulas (whose theory needs to be decidable). The exact instantiation of ground terms varies from method to method, but most instantiation-based heuristics are necessarily incomplete in general due to the undecidability of the underlying decision problems.

We can apply invariant synthesis framework for verification engines that employ quantifier instantiation in the following way. Assume that $\calU$ is an undecidable first-order theory allowing uninterpreted functions and that $\calD$ is its decidable quantifier-free fragment. Then, quantifier instantiation can be seen as a transformation of a $\calU$-formula $\varphi$ (potentially containing quantifiers) into a $\calD$-formula $\apprx(\varphi)$ in which all existential quantifiers have been eliminated (e.g., using skolemization) and all universal quantifiers have been replaced by finite conjunctions over ground terms.\kern-.06em\footnote{Quantifier instantiation is usually performed iteratively, but we here abstract away from this fact.}
This means that if the $\calD$-formula $\apprx(\varphi)$ is valid, then the $\calU$-formula $\varphi$ is valid as well.
On the other hand, if $\apprx(\varphi)$ is not valid, one cannot deduce the validity of $\varphi$.
However, a $\calD$-model of $\apprx(\varphi)$ can be used to derive non-provability information as described in
Section~\ref{sec:CDNPI}.

We have implemented our learning framework for synthesizing invariants based on bounded quantifier instantiation.
Our prototype is based on \textsc{Boogie}/Z3 as the verification engine and uses \textsc{Houdini} to learn conjunctive invariants.
In the remainder of this section, we present empirical results of this implementation on benchmarks taken from competitions and verified systems such as IronFleet~\cite{DBLP:conf/sosp/HawblitzelHKLPR15}.

\subsubsection*{Evaluation}
We collected a benchmarks suite of twelve programs, which we obtained by simplifying programs found in
IronFleet~\cite{DBLP:conf/sosp/HawblitzelHKLPR15} (provably correct distributed systems),
\textsc{VSComp} (Verified Software Competition) benchmarks~\cite{vscomp2010}, ExpressOS~\cite{DBLP:conf/asplos/MaiPXKM13} (a secure operating system for mobile devices),
and sparse matrix multiplication programs~\cite{Buluc:2009:PSM:1583991.1584053}.
In these programs, quantifiers are used in specifying recursively defined predicates such as $\mathsf{power}(n,m)$ and $\mathsf{sum}(n)$, 
and various array properties such as minimum/maximum elements, existence of specific elements, no duplicate elements, permutations of array elements, relations between two arrays,
periodic properties of array elements, and bijective (injective and surjective) maps.
The specifications hence are undecidable and fall outside of the decidable array property fragment~\cite{Bradley:2006:WDA:2146228.2146256}. 
In particular, the array specifications involve strict comparison ($<$) between universally quantified index variables, array accesses in the index guard,
nested array accesses (e.g., $a_1[a_2[i]]$), arithmetic expressions over universally quantified index variables, and alternation of universal and existential quantifiers.

From this benchmark suite, we erased the user-defined loop invariants and used our framework to find adequate inductive invariants.
We generated a set of predicates that serve as the building blocks of our invariants. To this end, we used the pre-/post-conditions of the program being verified 
as templates from which the actual predicates are generated; the templates are instantiated using all combinations of program variables that occur in the program.
We also generated predicates for octagonal constraints, (i.e., relations between two integer variables of the form, $\pm x \pm y \le c$). For a few programs,
we also generated the octagonal predicates over array access expressions that appear in the program.

\paragraph{Experimental Results}
We performed all experiments in a virtual machine running Ubuntu 16.04.1 on a single core of an Intel Core i7-7820\,HK 2.9\,GHz CPU with 2\,GB memory. The results of these experiments are listed in Table~\ref{tbl:results-quantifier-instantiation}.

\begin{table}[t!h]
	\caption{Experimental results of the quantifier instantiation benchmarks. The column \emph{$|\calP|$} refer to the number of candidate predicates, the column \emph{\# Iterations} to the number of iterations of the teacher and learner, and the column \emph{$|\mathrm{Inv}|$} to the number of predicates in the inferred invariant.} \label{tbl:results-quantifier-instantiation}

	\centering
	\begin{tabular}{l@{\hskip 1.5em}*{4}{@{\hskip 1.5em}r}}
		\toprule
		Program & $|\calP|$ & \# Iterations & $|\mathrm{Inv}|$ & Time in s \\
		\midrule
		inverse & 414 & 126 & 73 & 9.04 \\
		power2 & 109 & 55 & 34 &  2.10 \\
		powerN & 160 & 60 & 31 & 13.52 \\
		recordArraySplit & 1264 & 49 & 51 & 57.46 \\
		recordArrayUnzip & 222 & 17 & 25 & 0.84 \\
		removeDuplicates & 280 & 67 & 86 & 4.43 \\
		setFind & 492 & 74 & 136 & 2.76 \\
		setInsert & 556 & 73 & 188 & 6.70 \\
		sparseMatrixGen & 816 & 278 & 90 & 22.07 \\
		sparseMatrixMul & 768 & 313 & 91 & 14.49 \\
		sum & 128 & 40 & 22 & 1.02 \\
		sumMax & 192 & 61 & 45 & 4.31 \\
		\bottomrule
	\end{tabular}
\end{table}

As can be seen from the table, our framework is effective in finding inductive invariants that result in proving the programs correct (with an average of less than a minute per routine).
Despite having hundreds of candidate predicates in many examples, which in turn suggests a worst-case complexity of hundreds of rounds, the learner was able
to learn with much fewer rounds. Again, the non-provability information provided by the verification engine provides much more information than the worst-case
suggests. 


\section{Conclusion} \label{sec:conclusion}
We have presented learning-based framework for invariant synthesis in the presence of sound but incomplete verification engines.
To prove that our technique is effective in practice, we have implemented our framework for two types of specifications: an expressive and undecidable dialect of separation logic called \Dryad for specifying heap properties and specifications involving universal quantification.
In both cases, our prototype turned out to be extremely effective in learning inductive invariants and pre/post-conditions. In particular, the benchmark suite for \Dryad-annotated programs is extremely challenging, containing an extensive list of standard algorithms on dynamic data structures, and we are not aware of any other technique that can handle this benchmark suite.

Several future research directions are interesting. First, the framework we have developed is based on
CEGIS where the invariant synthesizer synthesizes invariants using non-provability information
but does not directly work on the program's structure. It would be interesting to extend
white-box invariant generation techniques such as interpolation/IC3/PDR, 
working using $\calD$ (or $\calB$) abstractions of the
program directly in order to synthesize invariants for them. 
Second, in the NPI learning framework we have put forth, it would be interesting to change the underlying
logic of communication $\calB$ to a richer logic, say the theory of arithmetic and uninterpreted functions.
The challenge here would be to extract non-provability information from the models to the richer theory,
and pairing them with synthesis engines that synthesize expressions against constraints in $B$.
Finally, we think invariant learning should also include \emph{experience} gained in verifying other programs in the past, both manually and automatically. A learning algorithm that combines logic-based synthesis with experience gained from repositories of verified programs can be more effective.

\bibliographystyle{splncs03}
\bibliography{main}

\begin{thebibliography}{10}
\providecommand{\url}[1]{\texttt{#1}}
\providecommand{\urlprefix}{URL }

\bibitem{DBLP:conf/esop/AlbarghouthiBCK15}
Albarghouthi, A., Berdine, J., Cook, B., Kincaid, Z.: Spatial interpolants. In:
  {ESOP} 2015 (part of {ETAPS} 2015). NCS, vol. 9032, pp. 634--660. Springer
  (2015)

\bibitem{DBLP:conf/pldi/BallMMR01}
Ball, T., Majumdar, R., Millstein, T.D., Rajamani, S.K.: Automatic predicate
  abstraction of {C} programs. In: PLDI, 2001. pp. 203--213 (2001)

\bibitem{DBLP:conf/fmco/BarnettCDJL05}
Barnett, M., Chang, B.E., DeLine, R., Jacobs, B., Leino, K.R.M.: Boogie: {A}
  modular reusable verifier for object-oriented programs. In: {FMCO} 2005.
  LNCS, vol. 4111, pp. 364--387. Springer (2005)

\bibitem{DBLP:conf/oopsla/BettsCDQT12}
Betts, A., Chong, N., Donaldson, A.F., Qadeer, S., Thomson, P.: Gpuverify: a
  verifier for {GPU} kernels. In: {OOPSLA} 2012. pp. 113--132. {ACM} (2012)

\bibitem{DBLP:conf/cav/BeyerK11}
Beyer, D., Keremoglu, M.E.: Cpachecker: {A} tool for configurable software
  verification. In: {CAV}, 2011. pp. 184--190 (2011)

\bibitem{DBLP:conf/vmcai/Bradley11}
Bradley, A.R.: Sat-based model checking without unrolling. In: {VMCAI} 2011.
  LNCS, vol. 6538, pp. 70--87. Springer (2011)

\bibitem{Bradley:2006:WDA:2146228.2146256}
Bradley, A.R., Manna, Z., Sipma, H.B.: What's decidable about arrays? In:
  VMCAI, 2006. pp. 427--442. VMCAI'06, Springer-Verlag (2006)

\bibitem{Buluc:2009:PSM:1583991.1584053}
Bulu\c{c}, A., Fineman, J.T., Frigo, M., Gilbert, J.R., Leiserson, C.E.:
  Parallel sparse matrix-vector and matrix-transpose-vector multiplication
  using compressed sparse blocks. In: SPAA, 2009. pp. 233--244. SPAA '09, ACM
  (2009)

\bibitem{DBLP:journals/jacm/CalcagnoDOY11}
Calcagno, C., Distefano, D., O'Hearn, P.W., Yang, H.: Compositional shape
  analysis by means of bi-abduction. J. {ACM}  58(6), ~26 (2011)

\bibitem{DBLP:conf/sp/ChenCQ015}
Chen, E.Y., Chen, S., Qadeer, S., Wang, R.: Securing multiparty online services
  via certification of symbolic transactions. In: {SP} 2015. pp. 833--849.
  {IEEE} Computer Society (2015)

\bibitem{Chin:2012:AVS:2221987.2222283}
Chin, W.N., David, C., Nguyen, H.H., Qin, S.: Automated verification of shape,
  size and bag properties via user-defined predicates in separation logic. Sci.
  Comput. Program.  77(9),  1006--1036 (2012)

\bibitem{DBLP:conf/pldi/ChuJT15}
Chu, D., Jaffar, J., Trinh, M.: Automatic induction proofs of data-structures
  in imperative programs. In: PLDI, 2015. pp. 457--466. {ACM} (2015)

\bibitem{DBLP:conf/tphol/CohenDHLMSST09}
Cohen, E., Dahlweid, M., Hillebrand, M.A., Leinenbach, D., Moskal, M., Santen,
  T., Schulte, W., Tobies, S.: {VCC:} {A} practical system for verifying
  concurrent {C}. In: {TPHOLs} 2009. LNCS, vol. 5674, pp. 23--42. Springer
  (2009)

\bibitem{DBLP:conf/sofsem/CohenPS13}
Cohen, E., Paul, W.J., Schmaltz, S.: Theory of multi core hypervisor
  verification. In: {SOFSEM} 2013. Lecture Notes in Computer Science, vol.
  7741, pp. 1--27. Springer (2013)

\bibitem{DBLP:conf/cav/ColonSS03}
Col{\'{o}}n, M., Sankaranarayanan, S., Sipma, H.: Linear invariant generation
  using non-linear constraint solving. In: {CAV} 2003. LNCS, vol. 2725, pp.
  420--432. Springer (2003)

\bibitem{DBLP:conf/popl/CousotC77}
Cousot, P., Cousot, R.: Abstract interpretation: {A} unified lattice model for
  static analysis of programs by construction or approximation of fixpoints.
  In: {POPL} 1977. pp. 238--252. {ACM} Press (1977)

\bibitem{DBLP:journals/jacm/DetlefsNS05}
Detlefs, D., Nelson, G., Saxe, J.B.: Simplify: a theorem prover for program
  checking. J. {ACM}  52(3),  365--473 (2005)

\bibitem{DBLP:conf/oopsla/DilligDLM13}
Dillig, I., Dillig, T., Li, B., McMillan, K.L.: Inductive invariant generation
  via abductive inference. In: {OOPSLA} 2013. pp. 443--456 (2013)

\bibitem{Een:2011:EIP:2157654.2157675}
Een, N., Mishchenko, A., Brayton, R.: Efficient implementation of property
  directed reachability. In: FMCAD, 2011. pp. 125--134. FMCAD '11, FMCAD Inc
  (2011)

\bibitem{DBLP:conf/icse/ErnstCGN00}
Ernst, M.D., Czeisler, A., Griswold, W.G., Notkin, D.: Quickly detecting
  relevant program invariants. In: {ICSE} 2000. pp. 449--458. {ACM} Press
  (2000)

\bibitem{DBLP:conf/fm/FlanaganL01}
Flanagan, C., Leino, K.R.M.: Houdini, an annotation assistant for esc/java. In:
  {FME} 2001. LNCS, vol. 2021, pp. 500--517. Springer (2001)

\bibitem{Floyd67}
Floyd, R.W.: {Assigning Meanings to Programs}. In: Schwartz, J.T. (ed.)
  Proceedings of a Symposium on Applied Mathematics. Mathematical Aspects of
  Computer Science, vol.~19, pp. 19--31. American Mathematical Society (1967)

\bibitem{DBLP:conf/cav/0001LMN13}
Garg, P., L{\"{o}}ding, C., Madhusudan, P., Neider, D.: Learning universally
  quantified invariants of linear data structures. In: {CAV} 2013. LNCS, vol.
  8044, pp. 813--829. Springer (2013)

\bibitem{DBLP:conf/cav/0001LMN14}
Garg, P., L{\"{o}}ding, C., Madhusudan, P., Neider, D.: {ICE:} {A} robust
  framework for learning invariants. In: {CAV} 2014. LNCS, vol. 8559, pp.
  69--87. Springer (2014)

\bibitem{DBLP:conf/popl/GarMNR16}
Garg, P., Madhusudan, P., Neider, D., Roth, D.: {Learning Invariants using
  Decision Trees and Implication Counterexamples}. In: POPL, 2016. pp. 499--512
  (2016)

\bibitem{DBLP:conf/cav/GeM09}
Ge, Y., de~Moura, L.M.: Complete instantiation for quantified formulas in
  satisfiabiliby modulo theories. In: {CAV} 2009. pp. 306--320 (2009)

\bibitem{DBLP:conf/pldi/gulwani08}
Gulwani, S., Srivastava, S., Venkatesan, R.: Program analysis as constraint
  solving. In: {PLDI} 2008. pp. 281--292. ACM (2008)

\bibitem{invgen}
Gupta, A., Rybalchenko, A.: Invgen: An efficient invariant generator. In: {CAV}
  2009. LNCS, vol. 5643, pp. 634--640. Springer (2009)

\bibitem{DBLP:conf/sosp/HawblitzelHKLPR15}
Hawblitzel, C., Howell, J., Kapritsos, M., Lorch, J.R., Parno, B., Roberts,
  M.L., Setty, S.T.V., Zill, B.: Ironfleet: proving practical distributed
  systems correct. In: {SOSP} 2015, 2015. pp. 1--17 (2015)

\bibitem{DBLP:conf/osdi/HawblitzelHLNPZZ14}
Hawblitzel, C., Howell, J., Lorch, J.R., Narayan, A., Parno, B., Zhang, D.,
  Zill, B.: Ironclad apps: End-to-end security via automated full-system
  verification. In: {OSDI} 2014. pp. 165--181. {USENIX} Association (2014)

\bibitem{DBLP:journals/cacm/Hoare69}
Hoare, C.A.R.: An axiomatic basis for computer programming. Commun. {ACM}
  12(10),  576--580 (1969)

\bibitem{DBLP:conf/cav/ItzhakyBINS13}
Itzhaky, S., Banerjee, A., Immerman, N., Nanevski, A., Sagiv, M.:
  Effectively-propositional reasoning about reachability in linked data
  structures. In: {CAV} 2013. Lecture Notes in Computer Science, vol. 8044, pp.
  756--772. Springer (2013)

\bibitem{DBLP:conf/cav/ItzhakyBRST14}
Itzhaky, S., Bj{\o}rner, N., Reps, T.W., Sagiv, M., Thakur, A.V.:
  Property-directed shape analysis. In: {CAV}, 2014. pp. 35--51 (2014)

\bibitem{DBLP:conf/cav/KarbyshevBIRS15}
Karbyshev, A., Bj{\o}rner, N., Itzhaky, S., Rinetzky, N., Shoham, S.:
  Property-directed inference of universal invariants or proving their absence.
  In: {CAV} 2015. pp. 583--602 (2015)

\bibitem{Kearns:1994:ICL:200548}
Kearns, M.J., Vazirani, U.V.: An Introduction to Computational Learning Theory.
  MIT Press (1994)

\bibitem{vscomp2010}
Klebanov, V., M{\"u}ller, P., Shankar, N., Leavens, G.T., W{\"u}stholz, V.,
  Alkassar, E., Arthan, R., Bronish, D., Chapman, R., Cohen, E., Hillebrand,
  M., Jacobs, B., Leino, K.R.M., Monahan, R., Piessens, F., Polikarpova, N.,
  Ridge, T., Smans, J., Tobies, S., Tuerk, T., Ulbrich, M., Wei{\ss}, B.: The
  1st {V}erified {S}oftware {C}ompetition: Experience report. In: FM, 2011.
  LNCS, vol. 6664. Springer (2011)

\bibitem{DBLP:journals/corr/KrishnaPW15}
Krishna, S., Puhrsch, C., Wies, T.: Learning invariants using decision trees.
  CoRR  abs/1501.04725 (2015)

\bibitem{DBLP:conf/cav/LalQL12}
Lal, A., Qadeer, S., Lahiri, S.K.: A solver for reachability modulo theories.
  In: {CAV} 2012. LNCS, vol. 7358, pp. 427--443. Springer (2012)

\bibitem{DBLP:conf/cav/LeGQC14}
Le, Q.L., Gherghina, C., Qin, S., Chin, W.: Shape analysis via second-order
  bi-abduction. In: {CAV} 2014. pp. 52--68 (2014)

\bibitem{DBLP:conf/lpar/Leino10}
Leino, K.R.M.: Dafny: An automatic program verifier for functional correctness.
  In: {LPAR} 2010. LNCS, vol. 6355, pp. 348--370. Springer (2010)

\bibitem{DBLP:conf/tacas/LodingMN16}
L{\"{o}}ding, C., Madhusudan, P., Neider, D.: Abstract learning frameworks for
  synthesis. In: {TACAS} 2016. LNCS, vol. 9636, pp. 167--185. Springer (2016)

\bibitem{DBLP:conf/asplos/MaiPXKM13}
Mai, H., Pek, E., Xue, H., King, S.T., Madhusudan, P.: Verifying security
  invariants in expressos. In: {ASPLOS} 2013. pp. 293--304. {ACM} (2013)

\bibitem{mcmillan03}
McMillan, K.L.: Interpolation and {SAT-Based} model checking. In: {CAV} 2003.
  LNCS, vol. 2725, pp. 1--13. Springer (2003)

\bibitem{DBLP:conf/cade/MouraB07}
de~Moura, L.M., Bj{\o}rner, N.: Efficient e-matching for {SMT} solvers. In:
  CADE, 2007. pp. 183--198 (2007)

\bibitem{DBLP:conf/tacas/MouraB08}
de~Moura, L.M., Bj{\o}rner, N.: {Z3:} an efficient {SMT} solver. In: {TACAS}
  2008. LNCS, vol. 4963, pp. 337--340. Springer (2008)

\bibitem{DBLP:conf/pldi/PadhiSM16}
Padhi, S., Sharma, R., Millstein, T.D.: Data-driven precondition inference with
  learned features. In: PLDI, 2016. pp. 42--56 (2016)

\bibitem{DBLP:conf/kbse/PavlinovicLS16}
Pavlinovic, Z., Lal, A., Sharma, R.: Inferring annotations for device drivers
  from verification histories. In: ASE, 2016. pp. 450--460 (2016)

\bibitem{DBLP:conf/pldi/PekQM14}
Pek, E., Qiu, X., Madhusudan, P.: Natural proofs for data structure
  manipulation in {C} using separation logic. In: {PLDI} 2014. p.~46. {ACM}
  (2014)

\bibitem{DBLP:conf/cav/PiskacWZ13}
Piskac, R., Wies, T., Zufferey, D.: Automating separation logic using {SMT}.
  In: {CAV} 2013. Lecture Notes in Computer Science, vol. 8044, pp. 773--789.
  Springer (2013)

\bibitem{DBLP:conf/pldi/Qiu0SM13}
Qiu, X., Garg, P., Stefanescu, A., Madhusudan, P.: Natural proofs for
  structure, data, and separation. In: {PLDI} 2013. pp. 231--242. {ACM} (2013)

\bibitem{Sagiv:1999:PSA:292540.292552}
Sagiv, M., Reps, T., Wilhelm, R.: Parametric shape analysis via 3-valued logic.
  In: POPL, 1999. pp. 105--118. POPL '99, ACM (1999)

\bibitem{DBLP:conf/cav/0001A14}
Sharma, R., Aiken, A.: From invariant checking to invariant inference using
  randomized search. In: {CAV} 2014. LNCS, vol. 8559, pp. 88--105. Springer
  (2014)

\bibitem{DBLP:conf/esop/0001GHALN13}
Sharma, R., Gupta, S., Hariharan, B., Aiken, A., Liang, P., Nori, A.V.: A data
  driven approach for algebraic loop invariants. In: {ESOP} 2013. LNCS, vol.
  7792, pp. 574--592. Springer (2013)

\bibitem{DBLP:conf/sas/0001GHAN13}
Sharma, R., Gupta, S., Hariharan, B., Aiken, A., Nori, A.V.: Verification as
  learning geometric concepts. In: {SAS} 2013. LNCS, vol. 7935, pp. 388--411.
  Springer (2013)

\bibitem{DBLP:conf/cav/SharmaNA12}
Sharma, R., Nori, A.V., Aiken, A.: Interpolants as classifiers. In: {CAV} 2012.
  LNCS, vol. 7358, pp. 71--87. Springer (2012)

\bibitem{DBLP:conf/pldi/YangH10}
Yang, J., Hawblitzel, C.: Safe to the last instruction: automated verification
  of a type-safe operating system. In: {PLDI} 2010. pp. 99--110 (2010)

\bibitem{Zhu:2015:LRT:2784731.2784766}
Zhu, H., Nori, A.V., Jagannathan, S.: Learning refinement types. In: ICFP,
  2015. pp. 400--411. ICFP 2015, ACM (2015)

\end{thebibliography}

\clearpage
\appendix

\section{Detailed Results of the Heap Invariants Benchmarks}
\label{app:experimental-results-heap-benchmarks}

\begin{longtable}{l@{\hskip 1.5em}*{5}{@{\hskip 1.5em}r}}
	\caption{Experimental results of the heap invariants benchmarks. The column \emph{$|\calP|$} refer to the number of candidate predicates, the column \emph{Cat.}\ corresponds to the category of predicates used, the column \emph{\# Iterations} to the number of iterations of the teacher and learner, and the column \emph{$|\mathrm{Inv}|$} to the number of predicates in the inferred invariant. A \textsuperscript{\textdagger} indicates contract strengthening, while a \textsuperscript{\textasteriskcentered} indicates post condition learning. \label{tab:exp}	}\\
	\endfirsthead
	\caption[]{continued}
	\endhead

	\multicolumn{6}{c}{(1) GNU C Library(glibc) Singly and Sorted Linked-List} \\[.5ex]
	\toprule
	Program & $|\calP|$ & Cat. & \# Iterations & $|\mathrm{Inv}|$ & Time in s \\
	\midrule
	g\_slist\_copy & 368 & 2 & 123 & 101 & 55 \\
	g\_slist\_find & 48 & 2 & 18 & 9 & 0.8 \\
	g\_slist\_free & 22 & 1 & 15 & 1 & 1.2 \\
	g\_slist\_index & 237 & 3 & 68 & 57 & 6.3 \\
	g\_slist\_insert & 464 & 2 & 160 & 50 & 219.1 \\
	g\_slist\_insert\_before & 795 & 2 & 279 & 114 & 556.1 \\
	g\_slist\_insert\_sorted & 520 & 2 & 193 & 135 & 210.6 \\
	g\_slist\_last & 32 & 2 & 19 & 8 & 0.7 \\
	g\_slist\_length & 54 & 3 & 20 & 12 & 0.9 \\
	g\_slist\_nth & 88 & 3 & 26 & 17 & 1.1 \\
	g\_slist\_nth\_data & 342 & 3 & 99 & 62 & 9.2 \\
	g\_slist\_position & 162 & 3 & 32 & 18 & 2.7 \\
	g\_slist\_remove & 140 & 1 & 73 & 28 & 4.7 \\
	g\_slist\_remove\_all & 380 & 2 & 132 & 15 & 57.7 \\
	g\_slist\_remove\_link & 325 & 1 & 85 & 57 & 14.3 \\
	g\_slist\_reverse & 117 & 2 & 58 & 6 & 4.5 \\
	\bottomrule \\[.5ex]
	\multicolumn{6}{c}{(2) GNU C Library(glibc) Doubly Linked-List} \\[.5ex]
	\toprule
	Program & $|\calP|$ & Cat. & \# Iterations & $|\mathrm{Inv}|$ & Time in s \\
	\midrule
	g\_list\_find & 48 & 2 & 18 & 9 & 0.8 \\
	g\_list\_free & 22 & 1 & 15 & 1 & 0.9 \\
	g\_list\_index & 237 & 3 & 68 & 57 & 6.1 \\
	g\_list\_last & 22 & 1 & 15 & 6 & 0.3 \\
	g\_list\_length & 88 & 3 & 24 & 16 & 1.1 \\
	g\_list\_nth & 88 & 3 & 26 & 17 & 1.1 \\
	g\_list\_nth\_data & 342 & 3 & 99 & 62 & 9.4 \\
	g\_list\_position & 162 & 3 & 32 & 18 & 2.9 \\
	g\_list\_reverse & 320 & 2 & 84 & 2 & 20.2 \\
	\bottomrule \\[.5ex]
	\multicolumn{6}{c}{(3) OpenBSD SysQueue} \\[.5ex]
	\toprule
	Program & $|\calP|$ & Cat. & \# Iterations & $|\mathrm{Inv}|$ & Time in s \\
	\midrule
	squeue\_insert\_head\textsuperscript{\textasteriskcentered} & 5 & 1 & 3 & 2 & 10.8 \\
	squeue\_insert\_tail\textsuperscript{\textasteriskcentered} & 5 & 1 & 3 & 3 & 16.1 \\
	squeue\_remove\_head\textsuperscript{\textasteriskcentered} & 18 & 1 & 7 & 5 & 10.2 \\
	\bottomrule \pagebreak
	%
	%
	\multicolumn{6}{c}{(4) GRASShopper~\cite{DBLP:conf/cav/PiskacWZ13} Singly Linked-List} \\[.4ex]
	\toprule
	Program & $|\calP|$ & Cat. & \# Iterations & $|\mathrm{Inv}|$ & Time in s \\
	\midrule
	sl\_concat & 63 & 1 & 26 & 15 & 0.9 \\
	sl\_copy & 63 & 1 & 32 & 12 & 2.7 \\
	sl\_dispose & 22 & 1 & 14 & 1 & 0.7 \\
	sl\_filter & 140 & 1 & 63 & 9 & 5.9 \\
	sl\_insert & 63 & 1 & 31 & 19 & 1.3 \\
	sl\_remove & 22 & 1 & 15 & 6 & 0.6 \\
	sl\_reverse & 63 & 1 & 36 & 4 & 1.9 \\
	sl\_traverse & 22 & 1 & 15 & 4 & 0.2 \\
	\bottomrule \\[.5ex]
	\multicolumn{6}{c}{(5) GRASShopper~\cite{DBLP:conf/cav/PiskacWZ13} Doubly Linked-List} \\[.4ex]
	\toprule
	Program & $|\calP|$ & Cat. & \# Iterations & $|\mathrm{Inv}|$ & Time in s \\
	\midrule
	dl\_concat & 63 & 1 & 26 & 15 & 0.7 \\
	dl\_copy & 63 & 1 & 32 & 12 & 3.1 \\
	dl\_dispose & 140 & 1 & 44 & 4 & 7.1 \\
	dl\_filter & 140 & 1 & 63 & 9 & 4.6 \\
	dl\_insert & 63 & 1 & 31 & 19 & 0.9 \\
	dl\_remove & 22 & 1 & 15 & 6 & 0.4 \\
	dl\_reverse & 63 & 1 & 36 & 4 & 1.7 \\
	dl\_traverse & 22 & 1 & 14 & 4 & 0.2 \\
	\bottomrule \\[.5ex]
	\multicolumn{6}{c}{(6) GRASShopper~\cite{DBLP:conf/cav/PiskacWZ13} Sorted Linked-List} \\[.4ex]
	\toprule
	Program & $|\calP|$ & Cat. & \# Iterations & $|\mathrm{Inv}|$ & Time in s \\
	\midrule
	sls\_concat & 153 & 2 & 38 & 27 & 11 \\
	sls\_copy & 496 & 2 & 144 & 94 & 1679.7 \\
	sls\_dispose & 40 & 2 & 16 & 3 & 1.6 \\
	sls\_double\_all & 496 & 2 & 106 & 102 & 118.8 \\
	sls\_filter & 496 & 2 & 127 & 30 & 158.3 \\
	sls\_insert & 153 & 2 & 53 & 29 & 17.7 \\
	sls\_merge & 416 & 2 & 63 & 29 & 327.0 \\
	sls\_remove & 496 & 2 & 165 & 119 & 69.7 \\
	sls\_reverse & 102 & 2 & 28 & 13 & 21.6 \\
	sls\_split & 153 & 2 & 53 & 29 & 98.1 \\
	sls\_traverse & 40 & 2 & 18 & 9 & 0.9 \\
	\bottomrule \\[.5ex]
	\multicolumn{6}{c}{(7) VCDryad~\cite{DBLP:conf/pldi/PekQM14} Sorted Linked-List} \\[.5ex]
	\toprule
	Program & $|\calP|$ & Cat. & \# Iterations & $|\mathrm{Inv}|$ & Time in s \\
	\midrule
	find\_last\_sorted & 40 & 2 & 19 & 11 & 1.1 \\
	reverse\_sorted & 102 & 2 & 28 & 13 & 7.8 \\
	sorted\_insert\_iter & 201 & 2 & 59 & 48 & 103.2 \\
	\bottomrule \pagebreak
	%
	%
	\multicolumn{6}{c}{(8) VCDryad~\cite{DBLP:conf/pldi/PekQM14} Trees} \\[.5ex]
	\toprule
	Program & $|\calP|$ & Cat. & \# Iterations & $|\mathrm{Inv}|$ & Time in s \\
	\midrule
	avl-delete-rec\textsuperscript{\textdagger} & 72 & 3 & 16 & 5 & 449.1 \\
	avl-find-smallest\textsuperscript{\textdagger} & 19 & 3 & 5 & 11 & 0.2 \\
	avl-insert-rec\textsuperscript{\textdagger} & 72 & 3 & 23 & 14 & 102.2 \\
	bst-delete-rec\textsuperscript{\textdagger} & 68 & 2 & 16 & 11 & 180 \\
	bst-find-rec\textsuperscript{\textdagger} & 23 & 2 & 6 & 9 & 0.5 \\
	bst-insert-rec\textsuperscript{\textdagger} & 68 & 2 & 28 & 16 & 64.8 \\
	traverse-inorder\textsuperscript{\textdagger} & 9 & 3 & 6 & 3 & 0.2 \\
	traverse-posttorder\textsuperscript{\textdagger} & 9 & 3 & 6 & 3 & 0.2 \\
	traverse-preorder\textsuperscript{\textdagger} & 9 & 3 & 6 & 3 & 0.2 \\
	treap-delete-rec\textsuperscript{\textdagger} & 80 & 3 & 17 & 13 & 599.0 \\
	treap-find-rec\textsuperscript{\textdagger} & 25 & 3 & 6 & 11 & 0.6 \\
	\bottomrule \\[.5ex]
	\multicolumn{6}{c}{(9) AFWP~\cite{DBLP:conf/cav/ItzhakyBINS13} Singly and Sorted Linked-List} \\[.4ex]
	\toprule
	Program & $|\calP|$ & Cat. & \# Iterations & $|\mathrm{Inv}|$ & Time in s \\
	\midrule
	SLL-create & 5 & 1 & 5 & 1 & 0.1 \\
	SLL-delete-all & 22 & 1 & 14 & 1 & 5.3 \\
	SLL-delete & 265 & 1 & 106 & 47 & 9.6 \\
	SLL-filter & 63 & 1 & 34 & 9 & 2.1 \\
	SLL-find & 140 & 1 & 53 & 45 & 3 \\
	SLL-insert & 201 & 2 & 65 & 26 & 45.6 \\
	SLL-last & 63 & 1 & 34 & 9 & 1.2 \\
	SLL-merge & 416 & 2 & 71 & 46 & 339.3 \\
	SLL-reverse & 63 & 1 & 36 & 4 & 1.8 \\
	\bottomrule \\[.5ex]
	\multicolumn{6}{c}{(10) ExpressOS~\cite{DBLP:conf/asplos/MaiPXKM13} MemoryRegion} \\[.5ex]
	\toprule
	Program & $|\calP|$ & Cat. & \# Iterations & $|\mathrm{Inv}|$ & Time in s \\
	\midrule
	memory\_region\_find & 24 & 1 & 16 & 2 & 0.2 \\
	memory\_region\_init\textsuperscript{\textasteriskcentered} & 7 & 1 & 5 & 4 & 0.1 \\
	memory\_region\_insert & 51 & 1 & 16 & 3 & 0.3 \\
	split\_memory\_region\textsuperscript{\textasteriskcentered} & 24 & 1 & 9 & 6 & 5.6 \\
	\bottomrule
\end{longtable}

\end{document}